\documentclass[12pt]{article}
\usepackage{amsmath}
\usepackage{amssymb}
\usepackage{amsthm}
\usepackage{mathtools}
\usepackage{bm}
\usepackage{graphicx}
\graphicspath{{figures/}}
\usepackage{enumerate}
\usepackage{natbib}
\usepackage{url}
\usepackage{booktabs}
\usepackage{longtable}

\usepackage{xcolor}
\usepackage[normalem]{ulem}
\newcommand{\blind}{1}

\newcommand{\R}{\mathbb{R}}
\newcommand{\E}{\mathbb{E}}
\newcommand{\Var}{\operatorname{Var}}
\newcommand{\Prob}{\mathbb{P}}
\newcommand{\AUC}{\ensuremath{\mathrm{AUC}}}
\newcommand{\CL}{\ensuremath{\mathrm{CL}}}
\newcommand{\CI}{\ensuremath{\mathrm{CI}}}
\newcommand{\SEr}{\ensuremath{\mathrm{SE}}}
\newcommand{\Normal}{\mathcal{N}}
\newcommand{\GP}{\mathcal{GP}}
\newcommand{\Tcal}{\mathcal{T}}
\newcommand{\Dcal}{\mathcal{D}}
\DeclareMathOperator{\diag}{diag}

\DeclareMathOperator*{\argmax}{arg\,max}

\theoremstyle{plain}
\newtheorem{theorem}{Theorem}[section]

\newtheorem{corollary}[theorem]{Corollary}
\newtheorem{lemma}[theorem]{Lemma}
\theoremstyle{definition}

\theoremstyle{remark}

\begin{document}

\def\spacingset#1{\renewcommand{\baselinestretch}%
{#1}\small\normalsize} \spacingset{1}


\if1\blind
{
  \title{\bf Compartment-driven Gaussian processes for uncertainty quantification of pharmacokinetic exposures}
  \author{Sanjay Palta-Hill \\
    Department of Biomedical Data Science, Stanford University
    \and
    Barbara Engelhardt\thanks{Corresponding Author.}\\
    Department of Biomedical Data Science, Stanford University 
    \and 
    Edric Tam\thanks{Corresponding Author.} \\
    Department of Biomedical Data Science, Stanford University}
  \maketitle
} \fi

\if0\blind
{
  \bigskip
  \bigskip
  \bigskip
  \begin{center}
    {\LARGE\bf Compartment-driven Gaussian processes for uncertainty quantification of pharmacokinetic exposures}
\end{center}
  \medskip
} \fi

\vspace{-\baselineskip}
\begin{abstract}

Pharmacokinetic modeling is a crucial part of drug development. An important goal is to estimate patients' total and peak exposure to a drug over time, quantified by the area under the concentration--time curve (\AUC) and the maximum concentration ($C_{\max}$) respectively. Such exposure assessment is essential to balance efficacy and safety in medications that offer therapeutic benefits but carry risks of toxicity, such as chemotherapy drugs. Uncertainty quantification for exposure quantities, however, remains challenging, since they are functionals of an often sparsely observed concentration--time curve with measurement errors. We propose a compartment-driven Gaussian process framework that propagates uncertainty in concentration--time curves to pharmacokinetic exposure functionals. Our approach uses pharmacokinetic compartmental models to specify a mechanistically informed prior and Gaussian processes to accommodate departures from the assumed pharmacokinetic dynamics.  
The framework can accommodate both individual-level and group-aggregated concentration--time data commonly seen in practice. We provide straightforward computational procedures for drawing posterior inference on AUC and $C_{\max}$. Simulations across a range of observation schemes demonstrate robustness to departures from the assumed compartmental dynamics. We apply the method to patient-level datasets on warfarin and theophylline, and aggregated data from a chemotherapy study of 5-fluorouracil and capecitabine in patients with normal or impaired hepatic function. The chemotherapy analysis quantifies drug-specific exposure differences between groups with differing liver function, with potential implications for informing dosing decisions.
\end{abstract}

\noindent%
{\it Keywords:} Pharmacokinetics, Gaussian Processes, AUC, Chemotherapy, Compartment Models

\newpage
\spacingset{1.9} 

\section{Introduction}
\label{sec:intro}

Pharmacokinetics (PK) is the quantitative study of how drug concentrations evolve in the human body after administration. The central object of study in PK is the plasma concentration of a drug, which evolves over time in response to processes related to the absorption, distribution, metabolism and excretion (ADME) of the drug \citep{rowland2011clinical}. The plasma concentration--time curve is frequently used to inform the therapeutic efficacy and toxicity of a drug. Quantitative characterization of the concentration--time curve forms the foundation of PK profiling, which is an explicit requirement for regulatory approval of new therapeutics \citep{gabrielsson2016pharmacokinetic}. Clinical decisions are also frequently made on the basis of summary quantities derived from the concentration--time curve, most notably the area under the curve (AUC), which measures total drug exposure over a time interval, and the maximum concentration ($C_{\max}$), which measures peak exposure. Other important pharmacokinetic quantities, such as clearance and bioavailability, may be derived from AUC estimates \citep{rowland2011clinical}.

These PK exposure quantities are particularly important to profile for drugs that have substantial toxicity, where one has to select dosages that trade off between therapeutic benefit and unwanted side effects \citep{holford1981understanding}. For example, in oncology, dosing of certain chemotherapy agents is often individualized to a target AUC level with the goal of maintaining therapeutic exposure while limiting severe, exposure-related toxicity  \citep{calvert1989carboplatin}. In psychiatry, mood stabilizers and antiepileptics such as lithium and valproate have concentration-dependent toxicity, making monitoring of plasma concentrations a routine clinical practice \citep{hiemke2018consensus}. In transplantation medicine, immunosuppressants must be maintained within a narrow concentration range to balance graft rejection against infection \citep{dejonge2005therapeutic}.

Despite the central role of AUC and $C_{\max}$ in pharmacokinetics and clinical decision making, uncertainty quantification for the AUC and $C_{\max}$ remains challenging because they are functionals of a concentration--time curve that is reconstructed from sparse, noisy plasma measurements. It has been documented that uncertainty reporting for AUC and $C_{\max}$ is often incomplete, with many studies presenting these quantities as point estimates without confidence or credible intervals \citep{van2011quality}.

The standard approach to estimating concentration--time curves uses parametric compartmental models, in which a system of differential equations encodes the absorption, distribution, and elimination of a drug. Inference under these models generally does not account for departures from the assumed compartmental dynamics, which necessarily simplify the underlying biological processes. An alternative approach is noncompartmental analysis (NCA) \citep{gabrielsson2016pharmacokinetic}, which avoids specifying a complete dynamic model. Instead, it estimates the AUC by trapezoidal integration directly on the observed concentration--time data and takes $C_{\max}$ to be the largest observed concentration. These estimators alone do not directly provide uncertainty quantification for an individual’s exposure given sparse, error-prone measurements. 
 
Resampling approaches do not automatically resolve these limitations. Since the maximum is a non-smooth functional, the ordinary bootstrap can fail in this setting and requires additional conditions for validity \citep{bickel1981asymptotic,fang2019inference}. Standard nonlinear mixed-effects models \citep{sheiner1980evaluation,bauer2019nonmem} borrow information across patients by allowing pharmacokinetic parameters to vary between individuals. However, the concentration–time curves remain constrained to a prescribed parametric family. Off-the-shelf Gaussian process (GP) approaches \citep{rasmussen2006gaussian} are flexible but do not exploit known pharmacological structure nor prior knowledge encoded in compartmental models.
 
Motivated by these challenges, we propose a compartment-driven Gaussian process (CD-GP) framework for inference on functionals of PK concentration--time curves, combining the mechanistic structure of compartmental pharmacokinetic models with the flexibility of Gaussian processes. We regard uncertainty in exposure summaries as arising primarily from uncertainty in the underlying concentration--time curve, which must be reconstructed from sparse and error-prone observations. Once a concentration--time curve is fixed, quantities such as AUC and $C_{\max}$ are fully specified. Consequently, the posterior distribution of the curve induces posterior distributions for these functionals, providing a principled mechanism for uncertainty propagation. This shared source of uncertainty allows us to infer multiple exposure functionals from a single model of the curve. Moreover, since our framework treats the concentration--time curve as the primary object to model, it does not depend on whether the sampled observations come from a single patient (in which case the inferential target is that patient's exposure) or from an aggregated cohort summary (in which case the inferential target is the exposure of the population-mean curve). This allows for a unified approach to handle both individual-level and group-aggregated data commonly encountered in pharmacokinetic studies.

The CD-GP models the concentration--time curve as a GP, with the prior mean specified as the solution of a compartmental pharmacokinetic model, so that established pharmacological knowledge informs the analysis. The GP component captures residual deviations from the compartmental model and allows the posterior to depart from the parametric form when warranted by the data. The resulting posterior is a distribution over plausible concentration--time curves. Posterior distributions for any exposure functional, including AUC, $C_{\max}$, clearance, and other downstream quantities, are obtained by applying the corresponding functional to draws from this posterior. We show that this yields a closed-form posterior distribution for AUC, owing to the closure of Gaussian processes under integration, and enables consistent posterior approximation for $C_{\max}$. The framework retains the interpretability of mechanistically informed compartmental models, while adding the flexibility and principled uncertainty quantification of a GP component.

We apply our model to individual patient time series for warfarin and theophylline, two commonly used drugs. We also illustrate our model on aggregated concentration--time summaries for capecitabine and 5-fluorouracil (5-FU), chemotherapy agents used in colorectal and other cancers \citep{twelves1999hepatic}. Because liver dysfunction may alter the metabolism and elimination of capecitabine and 5-FU \citep{reigner2001clinical}, an important clinical question is whether liver dysfunction changes drug exposure sufficiently to warrant dose adjustment. Our framework enables probabilistic comparisons of drug exposure by converting the concentration--time summaries commonly reported in the pharmacokinetic literature into posterior distributions for AUC.

Our contribution in this paper is a CD-GP framework that treats incomplete knowledge of the concentration--time curve as the common source of uncertainty in different PK exposures. We combine prior pharmacological knowledge encoded in compartmental PK models with a flexible GP discrepancy that allows the concentration--time curve to depart from the assumed compartmental dynamics. By modeling the concentration--time curve directly and propagating uncertainty to its functionals, we obtain a unified Bayesian approach to inference on exposure quantities such as AUC and $C_{\max}$. The framework accommodates individual-level and group-aggregated observations and can be extended to incorporate more sophisticated mechanistic priors and other exposure functionals.

The remainder of the paper is organized as follows. Section~\ref{sec:related} reviews related work. Section~\ref{sec:prelim} establishes notation and provides the relevant background on compartmental pharmacokinetic models, GPs, and exposure functionals. Section~\ref{sec:model} describes the proposed model and develops posterior inference for both linear and nonlinear exposure functionals, with supporting theoretical results stated alongside each inferential procedure and proofs given in the Supplementary Material. Section~\ref{sec:sim} reports the simulation study. Section~\ref{sec:case} reports case studies on individualized data (warfarin) and on aggregated cohort-mean data (5-FU and capecitabine); a further individualized analysis of theophylline is given in Section~\ref{sec:theo} of the Supplementary Material. Section~\ref{sec:disc} discusses extensions and limitations.

\section{Related work}
\label{sec:related}

Compartmental pharmacokinetic models trace back at least to \citet{bateman1910solution} and were systematized in the latter half of the twentieth century into the one- and two-compartment forms now standard in pharmacology textbooks \citep{wagner1975fundamentals, gibaldi1982pharmacokinetics, gabrielsson2016pharmacokinetic}. These models represent the human body as a small number of well-mixed compartments and yield analytic concentration--time curves using systems of differential equations. Noncompartmental analysis (NCA), originating with \citet{yeh1978comparison}, avoids imposing structural assumptions by directly computing the AUC from the observed data via the trapezoidal rule and reporting the largest observed concentration as $C_{\max}$. Both approaches provide point estimates of exposures with limited tools for uncertainty quantification.

Population pharmacokinetics extends compartmental modeling to hierarchical analyses of multi-patient datasets, often through the nonlinear mixed-effects framework. \citet{sheiner1980evaluation, sheiner1981evaluation} introduced the NONMEM software for estimation of population pharmacokinetic parameters. \citet{lindstrom1990nonlinear} and \citet{pinheiro2000mixed} developed the broader statistical theory and software for nonlinear mixed-effects models. Bayesian formulations were developed by \citet{wakefield1996bayesian} and \citet{wakefield1994bayesian}, and surveyed by \citet{davidian1995nonlinear, davidian2003nonlinear}. Stochastic-approximation expectation--maximization algorithms \citep{delyon1999convergence, lavielle2014mixed} have made population pharmacokinetic estimation routine in large datasets. Software implementations, including Monolix and the \texttt{nlme} package in R, are now widely used in clinical pharmacology. 

The use of Gaussian processes (GPs) for regression and functional modeling has a long history in statistics and machine learning \citep{ohagan1978curve, rasmussen2006gaussian}. GP methods have been applied to longitudinal biomedical data in settings such as disease-trajectory modeling \citep{schulam2015framework} and clinical time-series prediction \citep{futoma2017improved}. Bayesian frameworks for calibrating computer models \citep{kennedy2001bayesian,higdon2008computer, brynjarsdottir2014learning} use physical observations to infer unknown simulator parameters while accounting for systematic discrepancies between model predictions and the underlying physical processes. 

Direct application of GP regression to pharmacokinetic concentration--time data remains comparatively rare. Closest in spirit to our construction is \citet{wheeler2014mechanistic}, who built hierarchical GP priors that favor curves consistent with a linear ordinary differential equation system that describes muscle mechanics. Our work differs in its inferential goal. Rather than estimating mechanistic parameters or borrowing strength across curves, we take the concentration--time curve itself as the object of inference and derive posterior distributions for the exposure functionals of clinical interest. 

Inference for functionals of GPs is a crucial part of our proposed framework. Linear functionals of GPs remain Gaussian and admit explicit expressions for their means and variances, a fact used in Bayesian quadrature \citep{ohagan1991bayes, briol2019probabilistic} and in the broader theory of linear operators on GP priors \citep{sarkka2011linear}. Inference for nonlinear functionals, in particular the supremum, has been studied in the theoretical literature on the geometry of Gaussian random fields \citep{adler2007random} and concentration inequalities for Lipschitz functionals of Gaussian vectors \citep{borell1975brunn, sudakov1974extremal, boucheron2013concentration}. These tools to our knowledge have not been applied to posterior inference for $C_{\max}$ in pharmacokinetics.

\section{Preliminaries}
\label{sec:prelim}

This section establishes the notation used throughout the paper, introduces the compartmental pharmacokinetic models that will serve as the mechanistically informed prior in our framework, and reviews Gaussian processes together with the relevant covariance functions. 

\subsection{Setup and notation}
\label{sec:notation}

The primary inferential object throughout is the concentration--time curve $C:\Tcal\rightarrow\R,$ where $\Tcal=[a,b]$. 
Exposure summaries such as AUC and $C_{\max}$ are obtained as functionals of $C$. Depending on the available data, $C$ represents either the curve of an individual patient or the mean curve of a clinically defined group.

When individual-level measurements are available for patient $i$, indexed by $i = 1, \ldots, N$, we write
$
T_i=(t_{i1},\ldots,t_{in_i})^\top
$ and $
Y_i=(Y_{i1},\ldots,Y_{in_i})^\top,
$
where $T_i$ contains the observation times and $Y_i$ contains the corresponding plasma concentrations. The data are
$
\Dcal_i=\{(t_{ij},Y_{ij}):j=1,\ldots,n_i\},
$
and the observation model is
$
Y_i=C_i(T_i)+\epsilon_i$, where $
\epsilon_i\sim\Normal(0,R_i)$. The inferential target is the individual curve $C_i$ and its exposure functionals.

In pharmacokinetic studies, concentrations may be reported only as averages over patients within a clinically defined group. Such data take the form
$
\overline{\Dcal}
=
\{(t_j,\overline Y_j,\SEr_j):j=1,\ldots,n\},
$
where $\overline Y_j$ is the reported mean concentration at time $t_j$ and $\SEr_j$ is its standard error. The target is then the group-mean curve
$
\overline C(t)=\E_i\{C_i(t)\},
$
where the expectation is over patients represented by that clinical group. Writing
$
T=(t_1,\ldots,t_n)^\top$, $
\overline Y=(\overline Y_1,\ldots,\overline Y_n)^\top,
$
we use the observation model
$
\overline Y=\overline C(T)+\overline\epsilon$, where $
\overline\epsilon\sim\Normal(0,R).
$
When only marginal standard errors are reported, we take
$
R=\diag(\SEr_1^2,\ldots,\SEr_n^2).
$

The two data sources target different exposure quantities. By linearity,
\[
\int_a^b\overline C(t)\,dt
=
\E_i\left\{\int_a^b C_i(t)\,dt\right\},
\]
so the AUC of the group-mean curve equals the mean of the individual AUCs. In contrast,
$
\sup_{t\in\Tcal}\overline C(t)
\leq
\E_i\left\{\sup_{t\in\Tcal}C_i(t)\right\},
$
so the peak of the group-mean curve is generally not the mean of the individual peaks. Accordingly, posterior intervals obtained from the two data sources have different clinical interpretations. The results below apply to either setting by taking $C$ to denote the relevant target curve.

We write $\Normal(\mu,\Sigma)$ for a multivariate normal distribution, and $\GP(m,k)$ for a Gaussian process with mean function $m$ and covariance kernel $k$. The standard normal cumulative distribution function is denoted by $\Phi$, and $I_d$ denotes the $d\times d$ identity matrix. For a function $f$ and evaluation points $T=(t_1,\ldots,t_n)^\top$, we write
$
f(T)=(f(t_1),\ldots,f(t_n))^\top.
$
For $T\in\R^n$, $U\in\R^q$, and a kernel function $k$, let $k(T,U)\in\R^{n\times q}$ have $(r,s)$th entry $k(t_r,u_s)$, and write $K_{TT}=k(T,T)$.

\subsection{Compartment models in pharmacokinetics}
\label{sec:cpt}

Compartmental pharmacokinetic models represent the body as a small number of well-mixed kinetic compartments connected by first-order transfer rates \citep{gibaldi1982pharmacokinetics, rowland2011clinical}. Each compartment corresponds to a physiological space where the drug can reside. For example, blood plasma and tissues receiving high blood flow are often represented collectively as the so-called central compartment. Transfers between compartments and elimination from the system are governed by rate constants, and the dynamics form a system of linear ordinary differential equations. We write $m_i(t; \theta_i)$ for the plasma concentration predicted by the model at time $t$ for patient $i$. Here, $\theta_i$ collects the pharmacokinetic parameters of patient $i$. Compartmental PK models can be formulated in terms of either the total amount of drug in each compartment or the drug concentration within each compartment. The concentration is obtained by dividing the drug amount by the corresponding compartment's apparent volume of distribution \citep{rowland2011clinical}. We use the concentration formulation throughout because the concentration--time curve is our primary object of interest. Accordingly, $c_{0,i}$ denotes the initial dose normalized by the drug's so-called apparent volume of distribution.
 
The simplest PK compartment model describes a single dose of drug injected rapidly into the bloodstream, a form of administration known as an intravenous bolus \citep{rowland2011clinical}. In the one-compartment intravenous bolus model, the dose is delivered to the central compartment instantaneously and is thereafter eliminated by a first-order process at the elimination rate constant $\kappa_{e,i}$:
\[
\frac{d\, m_i(t; \theta_i)}{dt} = -\kappa_{e,i}\, m_i(t; \theta_i), \qquad m_i(0; \theta_i) = c_{0,i}.
\]
This gives the plasma concentration $m_i(t; \theta_i) = c_{0,i} \exp(-\kappa_{e,i} t)$, with parameter vector $\theta_i = (\kappa_{e,i}, c_{0,i})$.

For a drug that is administered orally, the simplest one-compartment model assumes absorption from a so-called depot, which represents drug remaining in the gut that has not yet been absorbed into the bloodstream. Drug passes from the depot into the central compartment by a first-order process with absorption rate constant $\kappa_{a,i}$. Writing $\xi_i(t)$ for the drug concentration in the depot, we have:
\[
\frac{d\, \xi_i(t)}{dt} = -\kappa_{a,i}\, \xi_i(t), \qquad
\frac{d\, m_i(t; \theta_i)}{dt} = \kappa_{a,i}\, \xi_i(t) - \kappa_{e,i}\, m_i(t; \theta_i),
\]
with initial conditions $\xi_i(0) = c_{0,i}$ and $m_i(0; \theta_i) = 0$. For $\kappa_{a,i}\ne\kappa_{e,i}$, solving the system gives the plasma concentration \citep{bateman1910solution, gibaldi1982pharmacokinetics}
\begin{equation}
m_i(t; \theta_i) = \frac{c_{0,i}\, \kappa_{a,i}}{\kappa_{a,i} - \kappa_{e,i}} \left\{e^{-\kappa_{e,i} t} - e^{-\kappa_{a,i} t}\right\},
\label{eq:onecpt}
\end{equation}
where $\theta_i = (\kappa_{a,i}, \kappa_{e,i}, c_{0,i})$ collects the absorption rate constant, the elimination rate constant, and the initial dose on the concentration scale. 
When $\kappa_{a,i}=\kappa_{e,i}=\kappa_i$, the continuous limiting expression is $m_i(t;\theta_i)=c_{0,i}\kappa_i t e^{-\kappa_i t}$.

More complex compartment models can be built by adding compartments or by replacing the linear elimination term with a nonlinear one. For example, the two-compartment model couples the central compartment to a peripheral tissue compartment. The exchange of drug between the compartments is described by the transfer rate constants $\kappa_{12}$ and $\kappa_{21}$, with elimination occurring only from the central compartment. The resulting plasma concentration is a sum of two decaying exponentials, $c_1\, e^{-\lambda_1 t} + c_2\, e^{-\lambda_2 t}$, where $c_1, c_2, \lambda_1, \lambda_2$ are explicit functions of the rate constants and the dose \citep{gibaldi1982pharmacokinetics}. 

Linear compartmental models with constant coefficients often admit closed-form expressions that are available in standard pharmacokinetic references \citep{wagner1975fundamentals, gibaldi1982pharmacokinetics, gabrielsson2016pharmacokinetic} and admit straightforward extension to multi-dose settings. Closed-form solutions for more complex nonlinear compartmental models, such as those incorporating Michaelis--Menten elimination \citep{gabrielsson2016pharmacokinetic}, are not in general available, and the corresponding initial-value problem is solved numerically, typically via the Runge--Kutta method. 

\subsection{Pharmacokinetic functionals}
\label{sec:functionals}

The main inferential targets of interest in this work can be stated as functionals of the concentration--time curve $C$. The total area under the concentration--time curve over the observation window is given by the linear functional
\begin{equation}
\AUC_{a:b} = \int_a^b C(t)\, dt.
\label{eq:auc}
\end{equation} On the other hand, the maximum concentration over $\Tcal$, written $C_{\max}$ throughout this paper, is
\begin{equation}
C_{\max} = \sup_{t \in \Tcal} C(t),
\label{eq:cmax}
\end{equation} which is a nonlinear functional of the curve $C$. Where a particular patient is referred to, we add a subscript $i$, as in $C_{\max,i}$.

A number of other clinically relevant pharmacokinetic quantities are built upon the AUC functional. An important example is the systemic clearance of a drug for an intravenous dose $D$, which is given by 
\begin{equation}
\CL = \frac{D}{\AUC_{0:\infty}}.
\label{eq:clearance}
\end{equation}
In practice, $\AUC_{0:\infty}$ may be approximated by $\AUC_{0:b}$ over a finite time interval when the unobserved tail contribution is negligible. Other examples include the bioavailability and the elimination half-life of a drug. Because these quantities can be written as functionals of the concentration--time curve, if a posterior distribution over the concentration--time curve is available, the posterior distribution of each PK quantity can be obtained by evaluating its corresponding functional on posterior draws of the concentration--time curve.

\subsection{Gaussian processes}
\label{sec:gp}

A Gaussian process (GP) is a collection of random variables $\{g(t) : t \in \Tcal\}$ such that, for every finite set of evaluation points $t_1, \ldots, t_n \in \Tcal$, the vector $(g(t_1), \ldots, g(t_n))^\top$ is multivariate normal \citep{rasmussen2006gaussian}. It is fully specified by a mean function $m : \Tcal \to \R$ and a symmetric positive semidefinite covariance kernel $k : \Tcal \times \Tcal \to \R$, and we write $g \sim \GP(m, k)$. When a GP is observed under additive Gaussian noise, the conditional distribution of the process given the observations is again a GP with closed-form mean and covariance functions \citep{rasmussen2006gaussian}; we defer the explicit posterior formulas to Section~\ref{sec:postcurve}, where they appear in the context of the proposed model.

In this work we focus on stationary covariance kernels of the form $k_\phi(s, t) = \sigma_f^2\, \rho_\nu(|t - s| / \ell)$, parameterized by a marginal variance $\sigma_f^2 > 0$ and a length-scale $\ell > 0$, collected as $\phi = (\sigma_f^2, \ell)$. The radial profile $\rho_\nu$ is taken from the Mat\'ern family, indexed by a smoothness parameter $\nu$: the process is $\lceil \nu \rceil - 1$ times mean-square differentiable, so that larger $\nu$ yields smoother realizations, with the squared-exponential kernel arising in the limit $\nu \to \infty$. Mat\'ern kernels are widely used in spatial statistics and GP regression because they permit explicit control over prior regularity \citep{stein1999interpolation,guttorp2006studies,rasmussen2006gaussian}. Kernel hyperparameters may be fixed using prior knowledge, selected by cross-validation, estimated by maximizing the marginal likelihood, or assigned prior distributions and inferred jointly with the latent process \citep[Chapter~5]{rasmussen2006gaussian}.

\section{Compartment-driven Gaussian process model}
\label{sec:model}

We now introduce our proposed compartment-driven Gaussian process model and develop posterior inference for the pharmacokinetic exposure functionals defined in Section~\ref{sec:functionals}. The goal is to produce a posterior distribution over the concentration--time curve $C$ and, by direct propagation, obtain posterior distributions over functionals such as $\AUC_{a:b}$ and $C_{\max}$ given the observed data $\Dcal$. Theoretical results are stated alongside the inferential procedures, whereas proofs are deferred to the Supplementary Material.

\subsection{Compartment-driven Gaussian process (CD-GP)}
\label{sec:hyper}

Let $C$ denote the target concentration--time curve of Section~\ref{sec:notation}, observed at times $T$ with concentrations $Y$. Conditional on a hyperparameter vector $\psi = (\theta, \phi, \sigma_\epsilon^2)$, the proposed model is
\begin{equation}
\begin{aligned}
Y \mid C, \psi &\sim \Normal\!\left(C(T),\, R\right), \\
C(t) &= m(t;\theta) + g(t), \\
g \mid \phi &\sim \GP(0, k_\phi),
\end{aligned}
\label{eq:model}
\end{equation}
or equivalently $C \mid \psi \sim \GP(m, k_\phi)$. The prior mean function $m$ is the solution of a compartmental pharmacokinetic model as described in Section~\ref{sec:cpt}, where the practitioner can choose a pharmacokinetic model appropriate for the drug under study. The one-compartment model with oral absorption (Equation~\ref{eq:onecpt}) is our default. When patient covariates $x$ are available, they may be incorporated into the specification of $m$. The covariance kernel $k_\phi$ is a stationary kernel introduced in Section~\ref{sec:gp}.

The two data types of Section~\ref{sec:notation} are recovered as special cases. For individual patient data, we can take $C$ to be the patient's curve $C_i$ and $R$ to be the observation covariance $\sigma_\epsilon^2 I_{n_i}$. For aggregated clinical data, we can take $C$ to be the mean curve $\overline C$ over the clinical cohort and $R$ to be $\diag(\SEr_1^2,\ldots,\SEr_n^2)$. Everything that follows is stated for generic $C$ and $R$, and therefore applies to either setting without modification.

The model is motivated by two requirements. First, established knowledge encoded in compartmental and population pharmacokinetic models should inform the prior, which helps to stabilize inference under sparse observations and anchor estimated curves to pharmacologically plausible dynamics. Second, compartmental models are necessarily simplifications of the underlying biology. We therefore use a GP residual to capture systematic departures from the assumed compartmental dynamics. The additive decomposition of $C$ into a mechanistic mean function $m$ and a GP residual $g$ gives a simple approach to achieve both objectives. Here the mean $m$ governs interpolations and extrapolations across time, whereas the kernel $k_\phi$ governs the size and smoothness of deviations from the mean.

The Gaussian specification (Equation~\ref{eq:model}) does not impose nonnegativity on either the latent curve or the observations. For most applications focused on AUC and $C_{\max}$, this Gaussian formulation can be appropriate when concentrations are appreciably above zero relative to their uncertainty over the dosing interval of interest. When substantial probability is assigned to negative concentrations over an appreciable part of the interval, a log transformation or another positivity-preserving formulation may be more appropriate.

We use the Mat\'ern-5/2 kernel throughout, giving a twice mean-square differentiable process. 


\subsection{Posterior concentration--time curve}
\label{sec:postcurve}

All exact posterior statements below are conditional on fixed hyperparameters. Let $K_{TT} = k(T, T) \in \R^{n \times n}$, let $\Sigma_Y = K_{TT} + R$ denote the marginal covariance of the observations, and assume that $\Sigma_Y$ is positive definite. Set $m(T) = (m(t_1), \ldots, m(t_n))^\top$ and $k(t, T) = (k(t, t_1), \ldots, k(t, t_n))$.

The posterior distribution $C \mid Y $ of the concentration--time curve is the object from which the posterior distributions of all exposure functionals are derived. We give a characterization of $C \mid Y $ below. 

\begin{theorem}[Posterior Gaussian process]
\label{thm:gp-posterior}
Under the model in~Equation~(\ref{eq:model}), the posterior distribution over the concentration--time curve is given by the GP
\begin{equation}
C \mid Y = y \sim \GP(\mu, K^+),
\label{eq:gp-posterior}
\end{equation}
with posterior mean and covariance
\begin{equation}
\mu(t) = m(t) + k(t, T)\, \Sigma_Y^{-1}\, \{y - m(T)\},
\label{eq:postmean}
\end{equation}
\begin{equation}
K^+(s, t) = k(s, t) - k(s, T)\, \Sigma_Y^{-1}\, k(T, t).
\label{eq:postcov}
\end{equation} 
\end{theorem}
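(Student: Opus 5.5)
The plan is to reduce the statement to conditioning a finite-dimensional jointly Gaussian vector. A GP is determined by its finite-dimensional distributions, and $Y$ is finite-dimensional. So it suffices to show that, for every finite collection of times $U=(u_1,\ldots,u_q)^\top\subset\Tcal$, the conditional law of $C(U)$ given $Y=y$ is $\Normal\big(\mu(U),K^+(U,U)\big)$ with $\mu$ and $K^+$ as in Equations~(\ref{eq:postmean})--(\ref{eq:postcov}).

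First, I would write down the joint distribution of $(C(U),Y)$. Under Equation~(\ref{eq:model}) we have $Y=C(T)+\epsilon$, with $\epsilon\sim\Normal(0,R)$ independent of $g$. Since $C=m+g$ with $g\sim\GP(0,k)$, the stacked vector $(C(U)^\top,C(T)^\top,\epsilon^\top)^\top$ is multivariate normal. Hence $(C(U)^\top,Y^\top)^\top$, a linear image of it, is also multivariate normal, with mean $(m(U)^\top,m(T)^\top)^\top$ and covariance blocks
\[
\Var\{C(U)\}=k(U,U),\qquad \operatorname{Cov}\{C(U),Y\}=k(U,T),\qquad \Var(Y)=K_{TT}+R=\Sigma_Y .
\]
Here independence of $g$ and $\epsilon$ kills the cross terms.

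Second, I would apply the standard Gaussian conditioning formula. Because $\Sigma_Y$ is positive definite by assumption, it yields
\[
C(U)\mid Y=y\sim\Normal\big(m(U)+k(U,T)\Sigma_Y^{-1}\{y-m(T)\},\;k(U,U)-k(U,T)\Sigma_Y^{-1}k(T,U)\big).
\]
If I want to be self-contained, I would justify this via the decorrelation trick. Set $Z=C(U)-k(U,T)\Sigma_Y^{-1}Y$. Then $\operatorname{Cov}(Z,Y)=0$, so $Z$ is independent of $Y$ by joint Gaussianity, and $C(U)=Z+k(U,T)\Sigma_Y^{-1}Y$ gives the conditional mean and covariance directly.

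Third, I would check that these match the statement entrywise, that is, $\mu(u_r)$ and $K^+(u_r,u_s)$. I would also check that the collection of conditional laws over all finite $U$ is consistent, which is automatic because marginalizing the Gaussian formula over a subset of $U$ gives the formula for that subset. By Kolmogorov extension, or simply by the definition of a GP in Section~\ref{sec:gp}, this identifies $C\mid Y=y$ as $\GP(\mu,K^+)$. I would also verify that $K^+$ is symmetric positive semidefinite, which follows because each $K^+(U,U)$ is a conditional covariance matrix.

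The main obstacle is a measure-theoretic subtlety. Conditioning on $Y=y$ is defined only almost everywhere, and "the posterior is a GP" must be read as a statement about a regular conditional distribution on path space rather than about the finite-dimensional laws alone. I would handle this by taking the finite-dimensional characterization as the definition of the posterior process, as is standard in \citet{rasmussen2006gaussian}. Since $\mu$ is a continuous function of $y$, the formula gives a canonical version for every $y$. I would remark that regularity of sample paths, such as the continuity needed later for $C_{\max}$, is a separate matter handled by the Mat\'ern-5/2 kernel.
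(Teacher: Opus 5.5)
Your proposal is correct and follows essentially the same route as the paper's proof: you form the joint Gaussian law of $(C(U),Y)$ for an arbitrary finite $U$, apply the multivariate normal conditioning formula, and extend to a process via consistency and Kolmogorov extension. Your decorrelation justification, the explicit consistency and positive-semidefiniteness checks, and the remark about versions of the conditional law are extra details that the paper states more tersely or leaves implicit.
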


In practice, $\Sigma_Y^{-1}$ need not be formed explicitly. A common computational trick in the GP literature is to compute a Cholesky factorization of $\Sigma_Y$ once, after which the linear systems required for the posterior mean and covariance are evaluated efficiently by forward and backward substitution.

\subsection{Posterior inference for linear exposure functionals}

The posterior $C \mid Y = y$ (Equation~\ref{eq:gp-posterior}) characterizes uncertainty in the entire concentration--time curve. This uncertainty propagates simply to linear functionals. In particular, since $\AUC_{a:b}=\int_a^b C(t)\,dt$ is the integral of the concentration--time curve over the exposure window, it is a linear functional of $C$ and its posterior is exactly Gaussian. This is elaborated below.

\begin{theorem}[Gaussianity of AUC]
\label{thm:auc-gaussian}
Under the posterior in Equation~\ref{eq:gp-posterior}, assume that the mean and covariance functions are continuous and that $C$ has almost surely continuous sample paths on the compact interval $[a,b]$, with $a<b$. Then
\[
\AUC_{a:b}\mid Y=y\sim\Normal(\mu_A,\sigma_A^2),
\]
where
\[
\mu_A=\int_a^b\mu(t)\,dt,
\qquad
\sigma_A^2=\int_a^b\int_a^b K^+(s,t)\,ds\,dt.
\]
\end{theorem}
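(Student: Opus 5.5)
We approximate the integral by Riemann sums and pass to the limit, using that limits of Gaussian random variables are Gaussian. By Theorem~\ref{thm:gp-posterior}, conditional on $Y=y$ the curve $C$ is a GP with mean $\mu$ and covariance $K^+$. Under the stated assumptions it has almost surely continuous paths on $[a,b]$, so $\AUC_{a:b}=\int_a^b C(t)\,dt$ is a well-defined Riemann integral pathwise, and hence a random variable.

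For $N\ge 1$, take the uniform partition $t_j=a+j(b-a)/N$ and set
\[
S_N=\frac{b-a}{N}\sum_{j=1}^N C(t_j).
\]
By the finite-dimensional definition of a GP, $(C(t_1),\ldots,C(t_N))$ is multivariate normal. So $S_N$, a fixed linear combination of these values, is univariate normal. Its mean is
\[
\frac{b-a}{N}\sum_{j}\mu(t_j),
\]
and its variance is
\[
\Bigl(\frac{b-a}{N}\Bigr)^2\sum_{j,l}K^+(t_j,t_l).
\]
Because $\mu$ and $K^+$ are continuous on the compact sets $[a,b]$ and $[a,b]^2$, these Riemann sums converge to $\mu_A$ and $\sigma_A^2$ respectively.

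Next, path continuity gives $S_N\to\AUC_{a:b}$ almost surely, and hence in distribution. We then check characteristic functions:
\[
\E e^{iuS_N}=\exp\bigl(iu\,\E S_N-\tfrac12u^2\Var S_N\bigr)\to\exp\bigl(iu\mu_A-\tfrac12u^2\sigma_A^2\bigr)
\]
for every $u$. By L\'evy's continuity theorem, or simply uniqueness of limits in distribution, $\AUC_{a:b}\mid Y=y\sim\Normal(\mu_A,\sigma_A^2)$. The degenerate case $\sigma_A^2=0$ gives a point mass, which is the usual convention for $\Normal(\mu_A,0)$. We also confirm that $\sigma_A^2\ge0$, since it is a limit of variances.

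The main obstacle is being careful about what identifies the distribution of the limit. Almost sure convergence of $S_N$ only yields convergence in law. The parameters $\mu_A$ and $\sigma_A^2$ must come from convergence of the Gaussian parameters, not from interchanging expectation and integral. That interchange would instead require Fubini together with the integrability bound $\E\int_a^b|C(t)|\,dt<\infty$. This bound does hold, because
\[
\E|C(t)|\le|\mu(t)|+\sqrt{K^+(t,t)},
\]
which is bounded on $[a,b]$. As a consistency check, Fubini then recovers the same $\mu_A$ and $\sigma_A^2$.
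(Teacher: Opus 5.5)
Your proof is correct and follows the same skeleton as the paper's: Riemann sums of jointly Gaussian evaluations, whose means and variances converge to $\mu_A$ and $\sigma_A^2$ by continuity of $\mu$ and $K^+$, with the law of the limit identified through convergence of characteristic functions. The only difference is the convergence step: you get $S_N\to\AUC_{a:b}$ almost surely directly from path continuity, whereas the paper proves $L^2$ convergence of the Riemann sums from uniform continuity of $K^+$ together with Cauchy--Schwarz and Tonelli, so your version is slightly more economical under the stated assumptions.
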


The corresponding $(1-\alpha)$-level credible interval for $\AUC_{a:b}$ is
\[
[\mu_A-z_{1-\alpha/2}\sigma_A,\,
 \mu_A+z_{1-\alpha/2}\sigma_A],
\]
where $z_{1-\alpha/2}$ is the standard normal quantile.

We evaluate the integrals defining the AUC posterior mean and variance using deterministic Gaussian quadrature, which enjoys strong guarantees on consistency. Table~\ref{tab:algorithm} provides a summary of the computation, whereas Section~\ref{sec:auc-numerics} of the Supplementary Material provides background on quadrature and our related consistency results. The practical implication of these results is that the closed-form Gaussian posterior of $\AUC_{a:b}$ can be evaluated to arbitrary precision numerically. Posterior means and credible intervals then follow directly from the resulting univariate normal distribution, without the need for any Monte Carlo simulation. 

\subsection{Posterior inference for nonlinear exposure functionals}

The posterior of $C_{\max}$ is generally not available in closed form because it is a nonlinear functional of the GP. We approximate this posterior by drawing concentration vectors on a dense grid from the posterior Gaussian distribution and taking the maximum of each draw. Posterior summaries and credible intervals are obtained from these simulated maxima. Table~\ref{tab:algorithm} summarizes the computational procedures, and Section~\ref{sec:S-max-functionals} of the Supplementary Material establishes convergence of the grid approximation.

\begin{figure}[!htbp]
\begingroup
\spacingset{1}
\centering
\includegraphics[width=\linewidth]{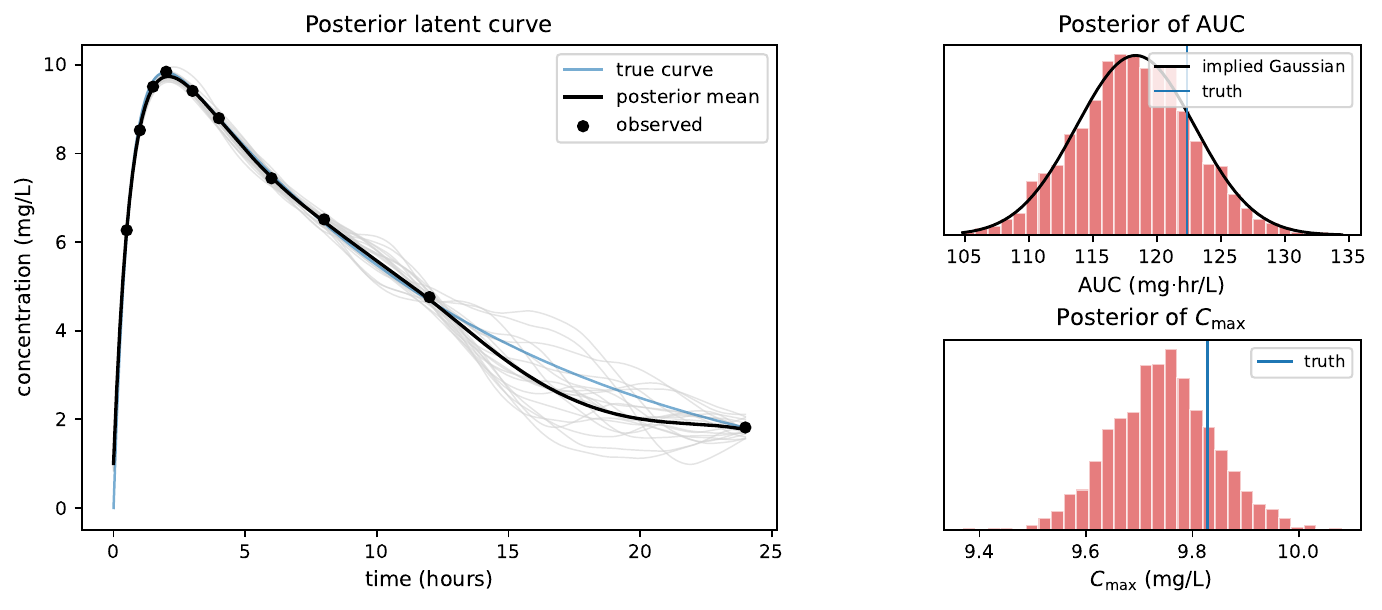}

\caption{Posterior path samples for a single target curve. Left: twenty posterior paths (light gray) and posterior mean (solid black), with observed data points overlaid. Right top: histogram of the posterior of $\AUC_{a:b}$, overlaid with the Gaussian density implied by the closed-form posterior mean and variance. Right bottom: histogram of the posterior of $C_{\max}$.
\label{fig:paths}}
\endgroup
\end{figure}

The posterior maximum satisfies a two-sided concentration bound around its posterior expectation.

\begin{theorem}[Posterior concentration of $C_{\max}$]
\label{thm:cmax-conc}
Suppose $C\mid Y=y\sim\GP(\mu,K^+)$ has continuous mean and covariance functions and almost surely continuous sample paths on $[a,b]$. Let
\[
m_\star=\E(C_{\max}\mid Y=y)<\infty,
\qquad
\sigma_\star^2=\sup_{t\in[a,b]}K^+(t,t)>0.
\]
Then, for every $r>0$,
\[
\Prob\!\left(|C_{\max}-m_\star|\ge r\mid Y=y\right)
\le 2\exp\!\left(-\frac{r^2}{2\sigma_\star^2}\right).
\]
Consequently, for every $\alpha\in(0,1)$,
\[
\Prob\!\left(
|C_{\max}-m_\star|\le\sigma_\star\sqrt{2\log(2/\alpha)}
\,\middle|\,Y=y\right)\ge1-\alpha.
\]
\end{theorem}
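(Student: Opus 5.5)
The plan is to reduce the statement to the Borell--TIS inequality, in its form as Gaussian concentration for Lipschitz functions of a standard Gaussian vector, applied on a finite grid. We then pass to the continuum using sample-path continuity.

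\emph{Finite-dimensional step.} Fix a finite set $S=\{s_1,\ldots,s_p\}\subset[a,b]$ and write $X=C(S)\sim\Normal(\mu(S),\Sigma_S)$ with $\Sigma_S=K^+(S,S)$. Factor $\Sigma_S=AA^\top$ and write $X=\mu(S)+AZ$ with $Z\sim\Normal(0,I_p)$. The map
\[
F(z)=\max_{j}\{\mu(s_j)+(Az)_j\}
\]
is Lipschitz in $z$, with constant $\max_j\|A_{j\cdot}\|_2=\max_j\{K^+(s_j,s_j)\}^{1/2}\le\sigma_\star$. This holds because a maximum of affine maps is Lipschitz with the largest row norm, and $|\max_j u_j-\max_j v_j|\le\max_j|u_j-v_j|\le\max_j\|A_{j\cdot}\|\,\|z-z'\|$ by Cauchy--Schwarz. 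The Tsirelson--Ibragimov--Sudakov Gaussian concentration inequality then gives
\[
\Prob(|M_S-\E M_S|\ge r)\le2\exp\{-r^2/(2\sigma_\star^2)\},\qquad M_S=\max_{s\in S}C(s).
\]

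\emph{Passage to the supremum.} Take an increasing sequence of finite sets $S_n$ whose union is dense in $[a,b]$, for instance dyadic grids. By almost sure continuity of the paths, $M_{S_n}\uparrow C_{\max}$ almost surely. To show $\E M_{S_n}\to m_\star$ we use monotone convergence, since $M_{S_n}\ge M_{S_1}$ and $M_{S_1}$ is integrable. Finiteness of $m_\star$ is assumed, and it also follows from Borell--TIS and continuity. Because the $M_{S_n}$ increase monotonically, the limit of the expectations is $\E C_{\max}=m_\star$.

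Fix $r>0$ and $\delta\in(0,r)$. Almost surely, eventually
\[
|M_{S_n}-\E M_{S_n}|\ge|C_{\max}-m_\star|-\delta .
\]
Hence the event $\{|C_{\max}-m_\star|>r\}$ is contained in $\liminf_n\{|M_{S_n}-\E M_{S_n}|\ge r-\delta\}$. Fatou's lemma then gives
\[
\Prob(|C_{\max}-m_\star|>r)\le2\exp\{-(r-\delta)^2/(2\sigma_\star^2)\}.
\]
Let $\delta\downarrow0$, and approximate the non-strict inequality $\ge r$ by $>r'$ with $r'\uparrow r$. Continuity of the right-hand side in $r$ yields the stated bound.

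\emph{Corollary.} Set $r=\sigma_\star\sqrt{2\log(2/\alpha)}$, which makes the right-hand side equal to $\alpha$, and take complements. The event of the complement is $\{|C_{\max}-m_\star|<r\}$, which is contained in $\{\le r\}$, so the stated inequality holds.

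\emph{Main obstacle.} The main obstacle is justifying the finite-dimensional concentration step with the correct constant $\sigma_\star^2$, i.e.\ the Lipschitz-constant computation. A secondary issue is the careful limiting argument, in particular the convergence of expectations. The limiting argument is handled by monotone convergence because the grids are nested. Measurability of $C_{\max}$ is immediate: by continuity it equals the supremum over a countable dense set.
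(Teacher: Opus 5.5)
Your proposal is correct and follows essentially the same route as the paper: Gaussian--Lipschitz concentration on a finite grid with Lipschitz constant $\max_j\{K^+(s_j,s_j)\}^{1/2}\le\sigma_\star$, then nested dense grids with monotone convergence giving $\E M_{S_n}\to m_\star$, a limit passage, and finally the choice $r=\sigma_\star\sqrt{2\log(2/\alpha)}$. The differences are cosmetic. You pass to the limit with Fatou's lemma on events, where the paper uses convergence in distribution at continuity points and then approximates general $r$ from below. You also take the two-sided bound already at the grid level, whereas the paper applies a union bound at the end.
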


Here $\sigma_\star$ is the largest posterior pointwise standard deviation. The bound quantifies uncertainty in peak drug concentration. The bound depends on the largest posterior pointwise standard deviation over the full interval. The proof and supporting results are provided in Section~\ref{sec:S-conc-ineq} of the Supplementary Material.

Table~\ref{tab:algorithm} summarizes posterior inference for individual curves and group-mean curves. The two settings are distinguished by the observation covariance $R$. 

\begin{table}[!htbp]
\begingroup
\spacingset{1}\small
\centering
\caption{Posterior inference for AUC and $C_{\max}$.\label{tab:algorithm}}
\begin{tabular}{@{}p{\linewidth}@{}}
\toprule
\textbf{Input:} data $(T, Y)$; mechanistic mean $m(\cdot)$; kernel $k_\phi$; observation covariance $R$; quadrature/grid nodes $U_q$ and weights $\omega_q$; number of posterior paths $S$; jitter $\eta$. \\
\midrule
\textbf{Step 1.} Compute $K_{TT} = k_\phi(T, T)$ and $\Sigma_Y = K_{TT} + R$. \\
\textbf{Step 2.} Compute Cholesky factor $L_\Sigma$ of $\Sigma_Y$ and solve $L_\Sigma \beta_1 = Y - m(T)$, $L_\Sigma^\top \beta = \beta_1$. \\
\textbf{Step 3.} Posterior mean on $U_q$: $\mu(U_q) = m(U_q) + k_\phi(U_q, T)\, \beta$. \\
\textbf{Step 4.} Posterior covariance on $U_q$: $K^+(U_q, U_q) = k_\phi(U_q, U_q) - k_\phi(U_q, T)\, \Sigma_Y^{-1}\, k_\phi(T, U_q)$, computed via Cholesky solves. \\
\textbf{Step 5.} Quadrature approximation to the AUC posterior: $\mu_{A,q}=\omega_q^\top\mu(U_q)$, $\sigma_{A,q}^2=\omega_q^\top K^+(U_q,U_q)\omega_q$. \\
\textbf{Step 6.} Posterior paths: factor $K^+(U_q,U_q) + \eta I_q = L_K L_K^\top$. For $s = 1, \ldots, S$, draw $Z_s \sim \Normal(0, I_q)$ and set $C^{(s)}(U_q) = \mu(U_q) + L_K\, Z_s$. \\
\textbf{Step 7.} Path summaries: $A^{(s)} = \omega_q^\top C^{(s)}(U_q)$, $M^{(s)} = \max_v C^{(s)}(u_v)$. \\
\textbf{Step 8.} Report posterior summaries: mean, median, and $(1-\alpha)$-credible intervals for $\AUC_{a:b}$ and $C_{\max}$. \\
\bottomrule
\end{tabular}
\endgroup
\end{table}

\subsection{Practical considerations}
\label{sec:practical}

We discuss several implementation considerations for our compartment-driven GP model. 

\textit{Hyperparameter selection.} The kernel hyperparameters $(\sigma_f, \ell)$ and, for individualized data, the observation-noise scale $\sigma_\epsilon$ can be chosen in different ways. A common approach is type-II maximum likelihood, or empirical Bayes, which maximizes the marginal likelihood of the observations with the latent process integrated out. This can be convenient here because that marginal likelihood is available in closed form \citep{rasmussen2006gaussian}. When few observations are available, often the case in pharmacology applications, unpenalized likelihood-based estimates are prone to degeneracy. For example, variance parameters might collapse towards zero and length scale parameters can drift beyond the observation window. A common remedy is to penalize the likelihood with weakly-informative priors, applied either through a maximum a posteriori estimate or through a fully Bayesian treatment that integrates the hyperparameters out \citep{gelman2006prior, simpson2017penalising, fuglstad2019constructing}. A further option is to fix a hyperparameter outright at a value implied by subject-matter knowledge. This is often the pragmatic choice in pharmacokinetic applications, where there is usually ample prior information on the relevant quantities for commonly used drugs.  
Any of these approaches may be reasonable depending on the length of the observed trajectories, the sampling design, and the functional of interest. We give detailed documentation of the hyperparameters that we choose in our simulations and experiments in Section~\ref{sec:hp-app} of the Supplementary Material.

\textit{Quadrature node count.} The closed-form AUC posterior requires a one- and a two-dimensional integral against $\mu$ and $K^+$. We use $q = 64$ Gauss--Legendre nodes for the numerical analyses, which we find to be accurate for the low-dimensional numerical integrals relevant to this work.  

\textit{Posterior path simulation.} For $C_{\max}$, the grid $U_q$ used for path simulation needs to be dense enough that the discrete maximum approximates the continuous-time supremum. Over a 24-hour observation window, $q = 200$ uniformly spaced nodes give a spacing of approximately $7$ minutes. This provides a comparatively fine starting grid relative to the clinical sampling schedules recommended for characterizing peaks in rapidly absorbed oral products \citep{ich2024m13a}. The grid resolution can be increased as needed to capture faster dynamics or sharper peaks.
The number of paths $S$ controls Monte Carlo error in the empirical $C_{\max}$ posterior; we use $S = 2000$ by default.

\textit{Computational complexity.} For a single curve with $n$ observations and a prediction grid of size $q$, Step 2 costs $O(n^3)$ and the covariance factorization in Step 6 costs $O(q^3)$. For typical pharmacokinetic studies with $n$ on the order of $5$--$20$ and $q$ on the order of $200$--$500$, these matrices are modest in size. Generating $S$ posterior paths in Step 6 additionally costs $O(S q^2)$.

\section{Simulation study}
\label{sec:sim}

We evaluate whether the proposed method delivers well-calibrated patient-level uncertainty for $\AUC$ and $C_{\max}$ across a range of pharmacokinetic settings. The design contrasts nine scenarios where simulated datasets are generated and analyzed by both our proposed model and alternative methods.

\subsection{Simulation design}

\paragraph{Scenarios.} The nine scenarios that we consider are listed in Table~\ref{tab:scenarios}. We group them as compartmental curves with standard sampling (S1--S3), compartmental curves with sparse sampling (S4--S6), and curves with an added discrepancy that makes them deviate from compartmental structure (S7--S9). Scenario~1 generates data from the same one-compartment oral model used as the prior mean, so that the mechanistic mean is correctly specified. Scenario~2 generates data from a two-compartment model while the fitted prior mean remains a one-compartment model, testing the GP's ability to absorb structural misspecification. Scenario~3 generates data from a one-compartment intravenous bolus model and fits the corresponding intravenous mean. Scenarios~4--6 mirror Scenarios~1--3 but with a sparse sampling design, to assess robustness when only a few post-dose measurements per patient are available. Scenarios~7--9 deviate from the exact compartmental model specification. The true curve includes an additive non-compartmental perturbation, so it does not belong to the compartmental family. Scenario~7 pairs the perturbation with a correctly specified one-compartment oral mean, Scenario~8 adds structural misspecification by generating from a two-compartment model, and Scenario~9 repeats Scenario~7 under the sparse design.

\begin{table}[!htbp]
\begingroup
\spacingset{1}
\caption{\textbf{Simulation scenarios to evaluate the compartment-driven GP model.} Sample size per subject $n_i$ is the number of post-dose concentration measurements. The standard design grid we use is $t \in \{0.5, 1, 1.5, 2, 3, 4, 6, 8, 12, 24\}$\,hr post-dose; the sparse design grid we use is $t \in \{1, 2, 4, 8, 16\}$\,hr. Here IV denotes intravenous administration.\label{tab:scenarios}}
\begin{center}
\begin{small}
\setlength{\tabcolsep}{4.5pt}
\begin{tabular}{lllll}
\toprule
Scenario & Generating model & Fitted mean & Sampling design & $n_i$ \\
\midrule
S1 & 1-compartment oral & 1-compartment oral & Standard grid & 10 \\
S2 & 2-compartment & 1-compartment oral & Standard grid & 10 \\
S3 & 1-compartment IV bolus & 1-compartment IV bolus & Standard grid & 10 \\
S4 & 1-compartment oral & 1-compartment oral & Sparse & 5 \\
S5 & 2-compartment & 1-compartment oral & Sparse & 5 \\
S6 & 1-compartment IV bolus & 1-compartment IV bolus & Sparse & 5 \\
\midrule
S7 & 1-compartment oral $+$ discrepancy & 1-compartment oral & Standard grid & 10 \\
S8 & 2-compartment $+$ discrepancy & 1-compartment oral & Standard grid & 10 \\
S9 & 1-compartment oral $+$ discrepancy & 1-compartment oral & Sparse & 5 \\
\bottomrule
\end{tabular}
\end{small}
\end{center}
\endgroup
\end{table}

\paragraph{Ground truth.} The population pharmacokinetic parameters for the simulations are fixed at $\kappa_a^{\text{pop}} = 1.5\,\text{hr}^{-1}$, $\kappa_e^{\text{pop}} = 0.1\,\text{hr}^{-1}$, and $c_0^{\text{pop}} = 10\,\text{mg/L}$. For each patient $i$, the patient-specific parameters are drawn as
\[
\kappa_{a,i} \sim \Normal(\kappa_a^{\text{pop}}, \sigma_{\kappa_a}^2), \qquad
\kappa_{e,i} \sim \Normal(\kappa_e^{\text{pop}}, \sigma_{\kappa_e}^2), \qquad
c_{0,i} \sim \Normal(c_0^{\text{pop}}, \sigma_{c_0}^2),
\]
with between-patient standard deviations $\sigma_{\kappa_a} = 0.3$, $\sigma_{\kappa_e} = 0.02$, and $\sigma_{c_0} = 2$. These numbers are chosen so that each parameter has a coefficient of variation of $20\%$.  
The two-compartment scenarios additionally specify inter-compartmental rate constants $\kappa_{12}^{\text{pop}} = 0.2\,\text{hr}^{-1}$ and $\kappa_{21}^{\text{pop}} = 0.1\,\text{hr}^{-1}$, with analogous Gaussian draws around these means. The observation model is $Y_{ij} = C_i(t_{ij}) + \epsilon_{ij}$ with $\epsilon_{ij} \sim \Normal(0, \sigma_\epsilon^2)$ and $\sigma_\epsilon = 0.1\,\text{mg/L}$. In Scenarios~7--9 there is an additional non-compartmental discrepancy,
$$
C_i(t) \;=\; \max\!\left\{\, m_i(t;\theta_i) + \delta_i(t),\; 0 \,\right\}, \qquad
\delta_i \overset{\text{iid}}{\sim} \mathcal{GP}\!\left(0,\; \tau^2 \rho_{5/2}(|s-t| / \ell_\delta)\right),
$$
drawn independently for each patient with $\tau = 0.6$\,mg/L and $\ell_\delta = 4$\,hr. 

\paragraph{Competing methods to the compartment-driven GP model.}
We list the competing methods compared below. 
\begin{enumerate}
\item Trapezoidal noncompartmental analysis (NCA+B) with a nonparametric bootstrap over time points, giving $\widehat{\AUC} = \sum_j \tfrac{1}{2}(t_{i,j+1}-t_{ij})(Y_{i,j+1}+Y_{ij})$ and $\widehat{C}_{\max} = \max_j Y_{ij}$ with bootstrap intervals.
\item Parametric one-compartment fit (P-1cpt).
\item Parametric two-compartment fit (P-2cpt).
\item Pure GP with a zero mean function, carrying no pharmacokinetic structure.
\item Compartment-driven GP (CD-GP; ours).
\end{enumerate}

\paragraph{Metrics for comparison.} Throughout the simulation the exposure window is $[a,b] = [0, 24]$\,hr, so $\AUC$ below means $\AUC_{0:24}$. We generate $B = 50$ simulated datasets of $N = 10$ patients each per scenario. For each simulated dataset $r$ and each patient $i$, we compute the point estimate $\widehat{\AUC}^{(r)}_i$ and a $95\%$ interval $\CI^{(r)}_i$ for $\AUC$,  
and analogously for $C_{\max}$. Writing $\AUC^{(r)}_{i,\text{true}}$ for the corresponding true value, and averaging over all $BN$ patient--dataset pairs, we report bias, root mean squared error (RMSE), coverage, and interval width:
\[
\begin{aligned}
\text{Bias}
&=\frac{1}{BN}\sum_r\sum_i
(\widehat{\AUC}^{(r)}_i-\AUC^{(r)}_{i,\text{true}}),\\
\text{RMSE}
&=\left[\frac{1}{BN}\sum_r\sum_i
(\widehat{\AUC}^{(r)}_i-\AUC^{(r)}_{i,\text{true}})^2\right]^{1/2}.
\end{aligned}
\]
$$
\text{Coverage} = \frac{1}{BN} \sum_{r} \sum_{i} \mathbf{1}\{\AUC^{(r)}_{i,\text{true}} \in \CI^{(r)}_i\},
\quad
\text{Width} = \frac{1}{BN} \sum_{r} \sum_{i} |\CI^{(r)}_i|,
$$
Here $|\CI|$ denotes interval length; the GP intervals are posterior credible intervals.

\subsection{Results}
We first assess uncertainty in total exposure across the three scenario groups. Across Scenarios~1--9, \AUC{} coverage for our method is approximately at or above the nominal $95\%$ (Table~\ref{tab:sim-results}, S1--S9, CD-GP rows). The intervals for our method are generally substantially narrower than those of the competitors that also attain high coverage. In the compartmental-curve scenarios S1--S6, coverage is conservative ($0.976$--$1.000$). This conservatism can be interpreted as a cost of robustness. The GP component accommodates departures from compartmental dynamics, but the additional flexibility can yield more conservative coverage when the compartmental structure already describes the data well. Related phenomena have been documented in the literature \citep{knapik2011bayesian,yang2017frequentist}.
For curves with an added discrepancy (Table~\ref{tab:sim-results}, S7--S9), the parametric compartmental fits exhibit undercoverage, whereas CD-GP maintains coverage much closer to the nominal level. 
The Pure GP competitor, which carries no pharmacokinetic structure, has conservative coverage and substantially wider \AUC{} credible intervals than CD-GP (Table~\ref{tab:sim-results}, S1--S9). These results support using the mechanistic mean to constrain exposure uncertainty while allowing departures from compartmental dynamics.

We next assess uncertainty in peak exposure in the oral-administration scenarios (Table~\ref{tab:sim-results}, S1--S2, S4--S5, and S7--S9). The noncompartmental and parametric compartmental methods generally undercover $C_{\max}$, whereas the Pure GP comparator yields conservative coverage and wider credible intervals than CD-GP. CD-GP generally achieves coverage closer to nominal, suggesting that combining mechanistic structure with a flexible discrepancy improves the balance between coverage and interval width in the settings examined.

\begingroup
\renewcommand{\baselinestretch}{1}
\small
\setlength{\tabcolsep}{4pt}
\renewcommand{\arraystretch}{1}
\setlength{\LTleft}{\fill}
\setlength{\LTright}{\fill}
\setlength{\LTcapwidth}{\textwidth}
\begin{longtable}{@{}llrrrrrrrr@{}}
\caption[Simulation results for $\AUC$ and $C_{\max}$ across Scenarios~S1--S9.]{\begingroup\renewcommand{\baselinestretch}{1.9}\selectfont
Simulation results for $\AUC$ and $C_{\max}$ across Scenarios~S1--S9.
Columns report bias, root mean squared error (RMSE), empirical coverage of nominal $95\%$ intervals, and mean interval width across $B\times N=500$ patient--dataset pairs; GP intervals are posterior credible intervals. Bias, RMSE, and width are in mg$\cdot$hr/L for $\AUC$ and mg/L for $C_{\max}$.
Nominal coverage is $0.95$; the Monte Carlo standard error of coverage near nominal is approximately $0.010$.
Methods: NCA+B, NCA with bootstrap; P-1cpt/P-2cpt, parametric one-/two-compartment fits; Pure GP, zero-mean GP; CD-GP, compartment-driven GP. Em dashes indicate $C_{\max}$ results omitted for the intravenous-bolus scenarios S3 and S6, whose generating curves decay monotonically from administration and attain their maximum at the initial concentration $c_0$.\endgraf\endgroup}
\label{tab:sim-results}\\
\toprule
& & \multicolumn{4}{c}{$\AUC$} & \multicolumn{4}{c}{$C_{\max}$} \\
\cmidrule(lr){3-6}\cmidrule(lr){7-10}
Scenario & Method & Bias & RMSE & Coverage & Width & Bias & RMSE & Coverage & Width \\
\midrule
\endfirsthead
\multicolumn{10}{c}{\tablename\ \thetable{} (continued)}\\
\toprule
& & \multicolumn{4}{c}{$\AUC$} & \multicolumn{4}{c}{$C_{\max}$} \\
\cmidrule(lr){3-6}\cmidrule(lr){7-10}
Scenario & Method & Bias & RMSE & Coverage & Width & Bias & RMSE & Coverage & Width \\
\midrule
\endhead
\midrule
\multicolumn{10}{r@{}}{\textit{Continued on next page}}\\
\endfoot
\bottomrule
\endlastfoot
\multicolumn{10}{@{}p{\textwidth}@{}}{A. Standard sampling: S1, one-compartment oral; S2, two-compartment; S3, intravenous bolus.}\\*
S1 & NCA+B & \phantom{$-$}1.12 & 1.59 & 1.000 & \phantom{0}59.9 & $-$0.00 & 0.09 & 0.514 & 0.61 \\*
 & P-1cpt & $-$0.06 & 1.03 & 0.768 & \phantom{0}\phantom{0}2.8 & $-$0.00 & 0.05 & 0.798 & 0.13 \\*
 & P-2cpt & \phantom{$-$}0.02 & 1.09 & 0.712 & \phantom{0}\phantom{0}2.6 & \phantom{$-$}0.00 & 0.06 & 0.692 & 0.13 \\*
 & Pure GP & $-$5.50 & 6.13 & 1.000 & 137.1 & \phantom{$-$}0.52 & 0.56 & 0.996 & 4.82 \\*
 & CD-GP & $-$0.35 & 3.09 & \textbf{0.998} & \phantom{0}17.7 & \phantom{$-$}0.01 & 0.12 & \textbf{0.936} & 0.40 \\
\midrule
S2 & NCA+B & $-$0.54 & 1.25 & 1.000 & \phantom{0}39.6 & $-$0.02 & 0.10 & 0.436 & 0.90 \\*
 & P-1cpt & $-$8.60 & 8.97 & 0.150 & \phantom{0}11.0 & $-$0.22 & 0.24 & 0.868 & 0.66 \\*
 & P-2cpt & $-$0.11 & 1.09 & 0.716 & \phantom{0}\phantom{0}2.6 & $-$0.01 & 0.06 & 0.694 & 0.14 \\*
 & Pure GP & $-$3.78 & 4.08 & 1.000 & 105.5 & \phantom{$-$}0.19 & 0.25 & 1.000 & 3.32 \\*
 & CD-GP & \phantom{$-$}1.61 & 2.26 & 1.000 & \phantom{0}23.3 & \phantom{$-$}0.04 & 0.12 & \textbf{0.922} & 0.40 \\
\midrule
S3 & NCA+B & $-$2.19 & 2.48 & 0.990 & \phantom{0}57.8 & --- & --- & --- & --- \\*
 & P-1cpt & \phantom{$-$}0.02 & 0.96 & 0.814 & \phantom{0}\phantom{0}2.8 & --- & --- & --- & --- \\*
 & P-2cpt & $-$0.14 & 2.69 & 0.708 & \phantom{0}\phantom{0}5.3 & --- & --- & --- & --- \\*
 & Pure GP & $-$5.94 & 6.51 & 1.000 & 134.2 & --- & --- & --- & --- \\*
 & CD-GP & $-$0.53 & 3.13 & 1.000 & \phantom{0}17.3 & --- & --- & --- & --- \\
\midrule
\multicolumn{10}{@{}p{\textwidth}@{}}{B. Sparse sampling: S4, one-compartment oral; S5, two-compartment; S6, intravenous bolus.}\\*
S4 & NCA+B & $-$16.15 & 17.62 & 0.044 & \phantom{0}59.9 & $-$0.04 & 0.13 & 0.372 & 1.33 \\*
 & P-1cpt & $-$0.01 & 1.28 & 0.648 & \phantom{0}\phantom{0}2.6 & $-$0.01 & 0.07 & 0.560 & 0.14 \\*
 & P-2cpt & \phantom{$-$}0.22 & 1.38 & 0.394 & \phantom{0}\phantom{0}1.6 & \phantom{$-$}0.01 & 0.09 & 0.368 & 0.09 \\*
 & Pure GP & $-$4.08 & 5.22 & 1.000 & 118.7 & \phantom{$-$}0.85 & 0.93 & 0.994 & 5.62 \\*
 & CD-GP & $-$0.61 & 3.91 & \textbf{0.976} & \phantom{0}16.6 & $-$0.01 & 0.13 & \textbf{0.976} & 0.55 \\
\midrule
S5 & NCA+B & $-$13.02 & 13.50 & 0.752 & \phantom{0}47.3 & $-$0.14 & 0.19 & 0.144 & 2.35 \\*
 & P-1cpt & $-$5.72 & 6.08 & 0.720 & \phantom{0}14.4 & $-$0.02 & 0.40 & 0.936 & 1.44 \\*
 & P-2cpt & \phantom{$-$}0.01 & 1.61 & 0.122 & \phantom{0}\phantom{0}0.6 & \phantom{$-$}0.01 & 0.08 & 0.146 & 0.03 \\*
 & Pure GP & $-$3.33 & 3.74 & 1.000 & \phantom{0}89.9 & \phantom{$-$}0.35 & 0.39 & 1.000 & 3.58 \\*
 & CD-GP & \phantom{$-$}0.85 & 2.00 & 1.000 & \phantom{0}20.9 & $-$0.01 & 0.12 & \textbf{0.992} & 0.59 \\
\midrule
S6 & NCA+B & $-$20.07 & 21.10 & 0.014 & \phantom{0}60.1 & --- & --- & --- & --- \\*
 & P-1cpt & \phantom{$-$}0.05 & 1.18 & 0.734 & \phantom{0}\phantom{0}3.1 & --- & --- & --- & --- \\*
 & P-2cpt & \phantom{$-$}1.22 & 10.28 & 0.456 & \phantom{0}12.4 & --- & --- & --- & --- \\*
 & Pure GP & $-$6.08 & 6.75 & 1.000 & 127.9 & --- & --- & --- & --- \\*
 & CD-GP & $-$0.53 & 3.58 & \textbf{0.980} & \phantom{0}16.1 & --- & --- & --- & --- \\
\midrule
\multicolumn{10}{@{}p{\textwidth}@{}}{C. Added non-compartmental discrepancy: S7/S9, one-compartment oral; S8, two-compartment; S9, sparse.}\\*
S7 & NCA+B & \phantom{$-$}1.43 & 4.05 & 0.964 & \phantom{0}62.7 & $-$0.00 & 0.10 & 0.482 & 0.64 \\*
 & P-1cpt & \phantom{$-$}0.16 & 4.26 & 0.600 & \phantom{0}\phantom{0}8.3 & $-$0.01 & 0.15 & 0.782 & 0.39 \\*
 & P-2cpt & \phantom{$-$}0.58 & 4.03 & 0.532 & \phantom{0}\phantom{0}6.3 & \phantom{$-$}0.01 & 0.11 & 0.804 & 0.32 \\*
 & Pure GP & $-$5.43 & 7.04 & 1.000 & 140.8 & \phantom{$-$}0.55 & 0.59 & 0.996 & 5.00 \\*
 & CD-GP & $-$0.32 & 4.76 & \textbf{0.952} & \phantom{0}18.7 & $-$0.00 & 0.12 & \textbf{0.920} & 0.41 \\
\midrule
S8 & NCA+B & $-$0.63 & 4.10 & 0.968 & \phantom{0}41.3 & $-$0.02 & 0.09 & 0.380 & 0.89 \\*
 & P-1cpt & $-$8.11 & 10.04 & 0.354 & \phantom{0}12.3 & $-$0.22 & 0.27 & 0.816 & 0.74 \\*
 & P-2cpt & $-$0.36 & 4.29 & 0.478 & \phantom{0}\phantom{0}6.0 & \phantom{$-$}0.01 & 0.09 & 0.838 & 0.32 \\*
 & Pure GP & $-$3.96 & 5.61 & 1.000 & 105.5 & \phantom{$-$}0.20 & 0.27 & 1.000 & 3.38 \\*
 & CD-GP & \phantom{$-$}1.41 & 4.23 & \textbf{0.998} & \phantom{0}24.1 & \phantom{$-$}0.04 & 0.13 & \textbf{0.922} & 0.42 \\
\midrule
S9 & NCA+B & $-$16.13 & 17.57 & 0.096 & \phantom{0}60.9 & $-$0.06 & 0.13 & 0.294 & 1.36 \\*
 & P-1cpt & \phantom{$-$}0.01 & 3.65 & 0.670 & \phantom{0}\phantom{0}9.1 & \phantom{$-$}0.00 & 0.18 & 0.740 & 0.49 \\*
 & P-2cpt & \phantom{$-$}0.97 & 4.41 & 0.370 & \phantom{0}\phantom{0}4.8 & \phantom{$-$}0.04 & 0.14 & 0.482 & 0.80 \\*
 & Pure GP & $-$4.02 & 5.62 & 1.000 & 119.8 & \phantom{$-$}0.85 & 0.92 & 0.988 & 5.66 \\*
 & CD-GP & $-$0.58 & 4.32 & \textbf{0.968} & \phantom{0}17.1 & $-$0.01 & 0.14 & \textbf{0.970} & 0.56 \\
\end{longtable}
\endgroup

\section{Case studies}
\label{sec:case}

We now illustrate the compartment-driven GP framework on several real pharmacokinetic datasets that span the two data types described in Section~\ref{sec:notation}. The first case concerns individualized patient-level time-series via the warfarin dataset, distributed in the \texttt{nlmixr2data} R package. This dataset is commonly used as a benchmark in pharmacokinetic modeling. A second individualized analysis, of the theophylline dataset from the same package, is reported in Section~\ref{sec:theo} of the Supplementary Material. 
The remaining two cases are aggregated concentration--time curves digitized from a clinical-pharmacology study of the chemotherapy agents capecitabine and 5-fluorouracil (5-FU) in patients with normal liver function and in patients with hepatic dysfunction due to liver metastases \citep{twelves1999hepatic}. For each chemotherapy drug, the framework produces posterior distributions over the cohort-mean \AUC{} for the two patient populations. This allows for a probabilistic assessment of the extent to which the hepatic-dysfunction cohort received higher total exposure to 5-FU and capecitabine than the cohort with normal liver function.  

Throughout this section, the prior mean function is a one-compartment oral absorption model as in Equation~(\ref{eq:onecpt}) with population-typical rate parameters taken from the pharmacokinetic literature. All case studies use the Mat\'ern-5/2 kernel with the length scale held fixed and the remaining hyperparameter priors (Supplementary Material Section~\ref{sec:hp-app}) scaled to the concentration range of each drug. Posterior path simulation uses $S = 2000$ paths per case on a $200$-point grid spanning the inferential window.  

\subsection{Warfarin}
\label{sec:warfarin}

The warfarin dataset includes $32$ adult patients (ages 21--63, body weights 40--102\,kg) who each received a single oral loading dose of warfarin between 60 and 153\,mg (weight-scaled), followed by 6--17 plasma-concentration measurements over the subsequent 120 hours (Supplementary Figure~\ref{fig:warfarin-patients}). Warfarin is an oral anticoagulant with a narrow therapeutic window. Population-typical adult values $\kappa_a = 0.8\,\text{hr}^{-1}$ and $\kappa_e = 0.012\,\text{hr}^{-1}$ (half-life $\approx 58\,$hr) were used to inform the mechanistic prior mean. The administered dose enters on the concentration scale through $c_{0,i} = D_i / (v_{\text{kg}} W_i)$, where $D_i$ and $W_i$ are patient $i$'s dose and body weight and $v_{\text{kg}} = 0.14\,\text{L/kg}$ is the apparent volume of distribution reported for warfarin.

Figure~\ref{fig:warfarin-intervals} shows patient-level posterior 95\% credible intervals for AUC over $[t_{\text{dose}},t_{\text{dose}}+168\,\mathrm{hr}]$ and for $C_{\max}$, with posterior median exposures spanning a broad range across patients. In particular, $\AUC$ medians range from 483 to 961\,mg$\cdot$hr/L (median 796), and $C_{\max}$ medians range from 8.8 to 17.3\,mg/L (median 10.9). Two patients (IDs 8 and 12) have markedly higher $C_{\max}$ intervals than the others. These higher peak exposures may be relevant when clinicians assess bleeding risk and consider dose reduction. The posterior credible intervals provide uncertainty quantification that could help guide such clinical decisions. 

\begin{figure}[!htbp]
\begingroup
\spacingset{1}
\centering
\includegraphics[width=\linewidth]{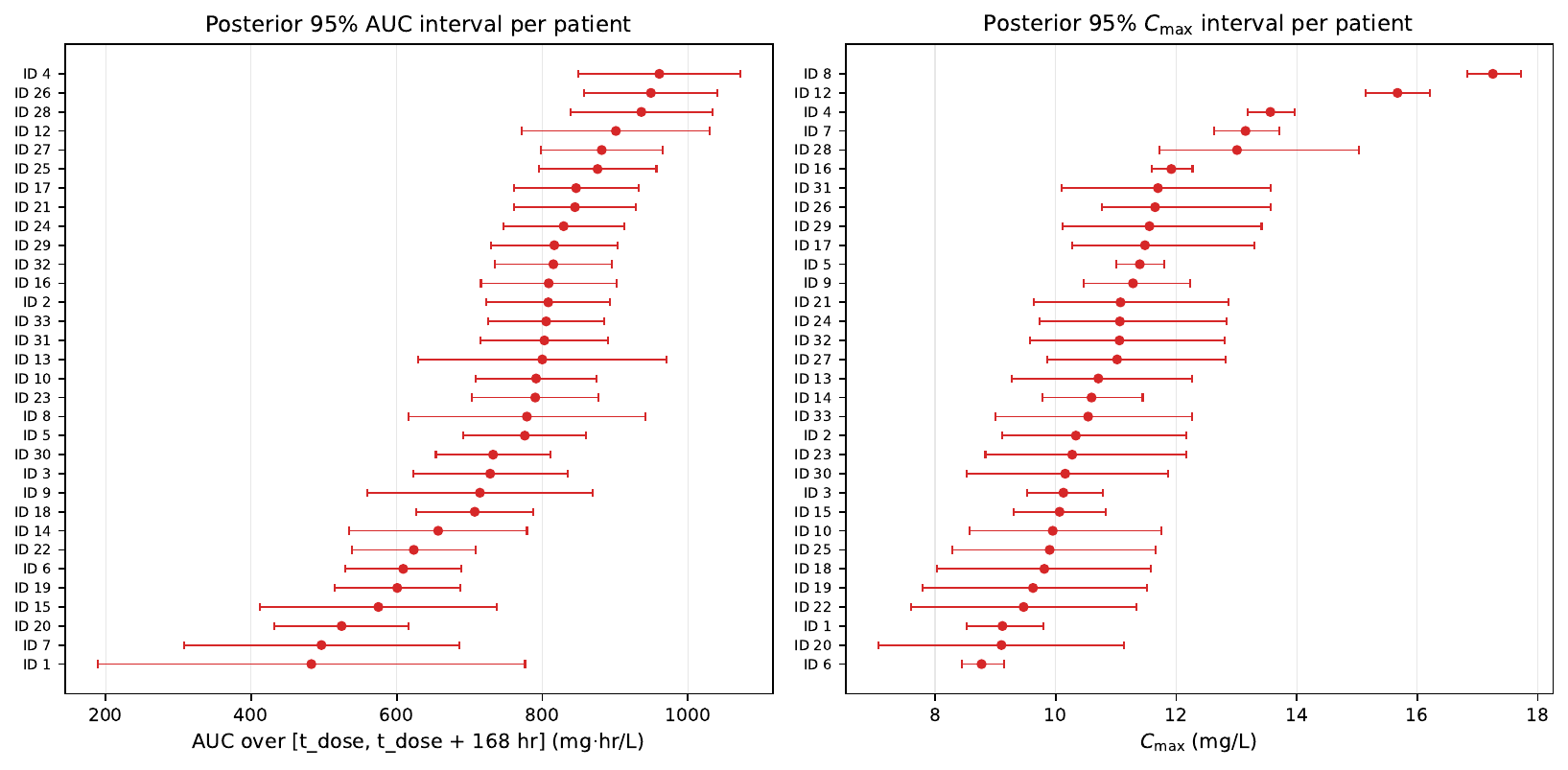}

\caption{Warfarin case study: posterior 95\% credible intervals per patient for $\AUC$ over $[t_{\text{dose}}, t_{\text{dose}} + 168\,\text{hr}]$ (left) and for $C_{\max}$ (right). Patients are sorted by posterior median.\label{fig:warfarin-intervals}}
\endgroup
\end{figure}

\subsection{Chemotherapy: 5-fluorouracil and capecitabine}
\label{sec:chemo}

We now consider aggregated concentration--time data for two related chemotherapy agents, digitized from the clinical-pharmacology study of \citet{twelves1999hepatic}. That study reports plasma concentration--time profiles of capecitabine and its principal active metabolite 5-fluorouracil (5-FU) in 27 patients with cancer who were evaluable for pharmacokinetic analysis: 14 patients with normal liver function and 13 patients with mild-to-moderate hepatic dysfunction due to liver metastases. 5-FU is one of the most widely used cytotoxic agents in treating colorectal cancer and several other tumor types. It has a narrow therapeutic window and substantial elimination through hepatic metabolism. Capecitabine is an oral prodrug that is metabolized to 5-FU. It is given as a chemotherapeutic agent in colorectal, breast, and gastric cancers. Because capecitabine itself is eliminated by hepatic metabolism, liver function is expected to be a determinant of plasma exposure.  Our framework converts the cohort-mean concentration--time curves and standard errors at observed time points into posterior distributions for each group's AUC. These distributions allow us to probabilistically assess whether exposure is higher in the hepatic dysfunction group.

Posterior fits to the cohort-mean concentration--time curves are shown in Section~\ref{sec:support-chemo} of the Supplementary Material. Comparing the posterior distributions suggests higher exposure in the hepatic dysfunction cohort for both 5-FU and capecitabine (Figure~\ref{fig:chemo-auc}). We estimate the posterior probability that $\mathrm{AUC}^{\mathrm{dysfunction}}>\mathrm{AUC}^{\mathrm{normal}}$ by repeatedly drawing AUC values from each cohort's posterior distribution and calculating the proportion of comparisons in which the dysfunction-cohort value is larger. For 5-FU, the posterior probability is approximately 0.77, with a posterior mean difference of approximately $+0.10\,\mu\mathrm{g}\cdot\mathrm{hr}/\mathrm{mL}$. For capecitabine, the posterior probability is approximately 0.93, with a posterior mean difference of approximately $+3.9\,\mu\mathrm{g}\cdot\mathrm{hr}/\mathrm{mL}$. The posterior comparisons favor a positive difference for both drugs. These findings highlight the potential value of accounting for liver function when assessing exposure and considering individualized dosing for 5-FU and capecitabine. 

\begin{figure}[!htbp]
\begingroup
\spacingset{1}
\centering
\includegraphics[width=\linewidth]{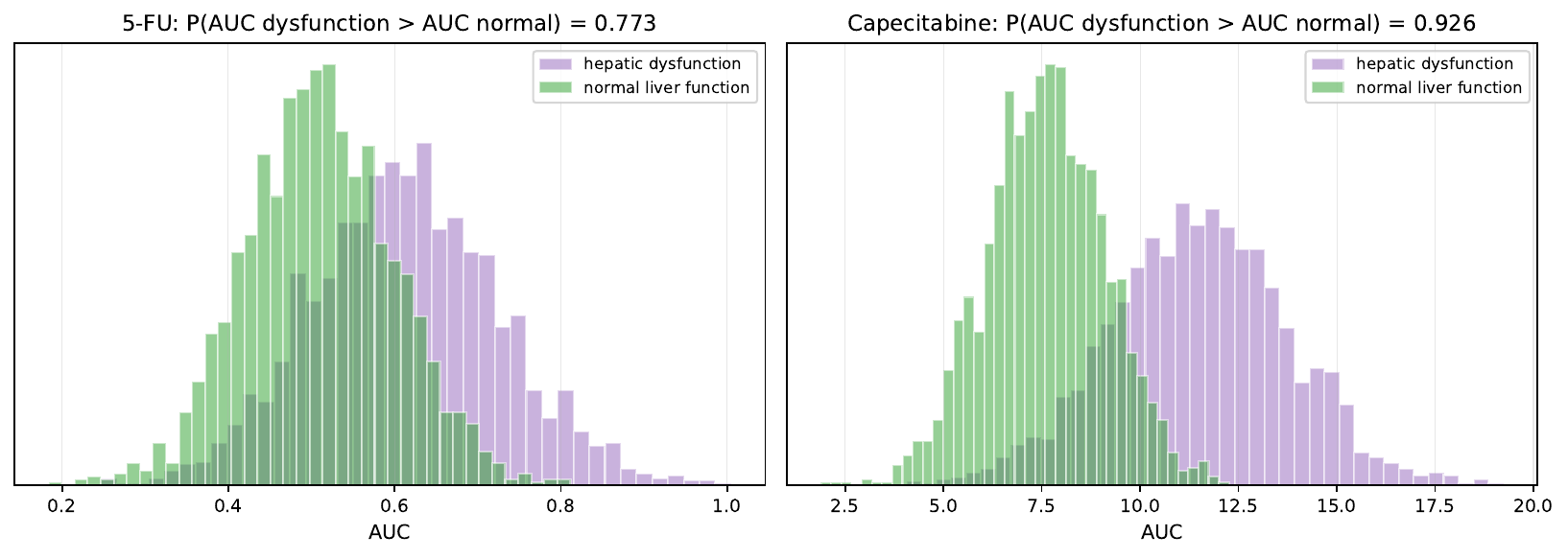}

\caption{Chemotherapy case study: posterior densities of cohort-mean \AUC{} for the hepatic dysfunction cohort (purple) and the normal liver function cohort (green), shown for 5-FU (left) and capecitabine (right). The title of each panel reports the estimated posterior probability that cohort-mean AUC is higher in the hepatic dysfunction group than in the normal liver function group. AUC is measured in $\mu\mathrm{g}\cdot\mathrm{hr}/\mathrm{mL}$.\label{fig:chemo-auc} 
}
\endgroup
\end{figure}

\section{Discussion}
\label{sec:disc}

We have presented a general probabilistic framework for uncertainty quantification of functionals of pharmacokinetic concentration--time curves, combining a compartment model-driven prior with a Gaussian process component. Although \AUC{} and $C_{\max}$ are the two functionals of primary clinical interest, the posterior over the latent concentration--time curve can be propagated to many other pharmacokinetic summary statistics. Examples include clearance, elimination half-life and time above a therapeutic or toxicity threshold. While some functionals may not admit closed-form posterior distributions, their uncertainty can be evaluated by applying the relevant functional to posterior draws of the latent curve.

Our approach reveals a deliberate trade-off between precision and robustness. If a compartmental model fully describes the data-generating process, a purely parametric compartmental analysis can produce narrower intervals by exploiting that structure. When the compartmental dynamics are only approximate, however, such intervals may be overconfident because they do not account for misspecification. Our model allows for added flexibility that alleviates undercoverage under misspecification, but may yield more conservative coverage when the compartmental model is correct. 

Several future directions could broaden the framework further. A hierarchical formulation could jointly analyze multiple patients, allowing information to be shared across individuals while retaining patient-specific posterior exposure summaries. Nonstationary covariance structures may also be useful because the rapid absorption phase and slower elimination phase may have substantially different temporal scales. The posterior uncertainty quantified by the framework could be used prospectively rather than only for retrospective analysis. For example, it can provide a basis for selecting sampling times that are most informative about \AUC{} and $C_{\max}$, for adapting measurement schedules to individual patients, and potentially for evaluating dose-adjustment decisions under explicit efficacy--toxicity trade-offs.

\section*{Acknowledgements}
ET acknowledges funding support from the Warren Alpert Foundation and the Croucher Foundation. BEE was funded in part by grants from the Parker Institute for Cancer Immunology (PICI), the Chan-Zuckerberg Institute (CZI), NIH NHGRI R01 HG012967, and NIH NHGRI R01 HG013736. BEE is a CIFAR Fellow in the Multiscale Human Program. BEE is on the Scientific Advisory Board for ArrePath Inc and Freenome; she consults for Neumora.

\section*{AI Use Disclosure}

AI tools (GPT 6 Astra and Claude Opus 5.5) assisted with coding, refining text, and the derivation of mathematical proofs. The authors are responsible for the content of this manuscript.

\clearpage

\begin{center}
{\large\bf SUPPLEMENTARY MATERIAL}
\end{center}

\bigskip

\noindent This Supplementary Material accompanies the main paper. Section~\ref{sec:hp-app} records application-specific hyperparameters and numerical settings. Section~\ref{sec:auc-numerics} introduces quadrature and gives the numerical AUC formulas and convergence results. Section~\ref{sec:proofs} provides proofs of the main-text theorems and supporting results for maximum functionals. Section~\ref{sec:support-cases} collects supporting simulation and application figures and the additional theophylline analysis.

\setcounter{section}{0}
\renewcommand{\thesection}{S\arabic{section}}
\renewcommand{\thesubsection}{\thesection.\arabic{subsection}}
\setcounter{theorem}{0}
\renewcommand{\thetheorem}{\thesection.\arabic{theorem}}
\setcounter{table}{0}
\renewcommand{\thetable}{S\arabic{table}}
\setcounter{figure}{0}
\renewcommand{\thefigure}{S\arabic{figure}}
\setcounter{equation}{0}
\renewcommand{\theequation}{S\arabic{equation}}

\section{Hyperparameter prior specification across applications}
\label{sec:hp-app}

We report here the hyperparameter priors used across the simulation study and the four case studies. In all applications, we place a log-normal prior on the kernel standard deviation $\sigma_f$ and estimate it by MAP penalized marginal likelihood (Section~\ref{sec:hyper}). For individual-level data, we also estimate the observation-noise standard deviation $\sigma_\epsilon$ by MAP, assigning it an independent log-normal prior. For aggregated data, the observation-noise standard deviation at each observed time point is fixed at the reported standard error. All log-normal priors have a common log-standard-deviation $\varsigma = 0.2$. Table~\ref{tab:hp-priors} reports the prior medians used in each application.

\begin{table}[h]
\caption{Prior medians $s_{\sigma_f}$ and $s_{\sigma_\epsilon}$ for the log-normal hyperparameter priors (common log-SD $\varsigma = 0.2$), together with the fixed length scale $\ell$, used in the simulation study and the case studies. ``Reported SE'' denotes the reported standard error at each observed time point in aggregated data. Concentrations are reported in the units used by the source of each dataset; note that mg/L and $\mu$g/mL are numerically identical. \label{tab:hp-priors}}
\begin{center}
\begin{small}
\begin{tabular}{lllccc}
\toprule
Application & Units & Kernel & $s_{\sigma_f}$ & $s_{\sigma_\epsilon}$ & $\ell$ (fixed) \\
\midrule
Simulation S1--S9        & mg/L    & Mat\'ern-5/2 & 0.6 & 0.2          & 4\,hr \\
Warfarin                 & mg/L    & Mat\'ern-5/2 & 1.5 & 0.2          & 24\,hr \\
Theophylline             & mg/L    & Mat\'ern-5/2 & 1.5 & 0.2          & 0.5\,hr \\
5-Fluorouracil (aggregated)   & $\mu$g/mL  & Mat\'ern-5/2 & 0.3 & reported \SEr  & 1.5\,hr \\
Capecitabine (aggregated)     & $\mu$g/mL  & Mat\'ern-5/2 & 4.0 & reported \SEr  & 1.5\,hr \\
\bottomrule
\end{tabular}
\end{small}
\end{center}
\end{table}

The prior medians in Table~\ref{tab:hp-priors} were chosen by inspection of the concentration scale and sampling cadence of each dataset. The prior median for the kernel standard deviation $\sigma_f$ is set to be of the same order as the empirical standard deviation of the residuals $Y_{ij} - m_i(t_{ij};\theta^{\text{pop}})$ between the observed concentrations and the population-typical mechanistic prior mean. For the warfarin and theophylline datasets, both of which are on the order of approximately $10$\,mg/L, this prescription gives $s_{\sigma_f} = 1.5$\,mg/L. For the chemotherapy aggregated cohort-mean curves, the concentration scales differ between 5-fluorouracil (peak cohort-mean of approximately $0.2$\,$\mu$g/mL, with maximum standard error of approximately $0.07$\,$\mu$g/mL) and capecitabine (peak cohort-mean of approximately $5.6$\,$\mu$g/mL, with maximum standard error of approximately $1.8$\,$\mu$g/mL), and the prior medians are scaled accordingly to $0.3$ and $4.0$\,$\mu$g/mL.

The prior median for the observation-noise standard deviation $\sigma_\epsilon$ is set at a level consistent with reported assay precision in the relevant plasma-pharmacokinetic literature for the warfarin and theophylline datasets, and at a comparable level for the simulation study, giving $\sigma_\epsilon$ prior median $0.2$\,mg/L throughout the individualized cases. For the aggregated chemotherapy case studies, the observation-noise standard deviations are fixed at the reported per-timepoint standard errors, so no prior on $\sigma_\epsilon$ is needed.

The length scale $\ell$ is not estimated. Across applications considered in this study it is held fixed at the value in Table~\ref{tab:hp-priors}. For the chemotherapy cohort-mean curves we choose $\ell = 1.5$\,hr based on the observation cadence, which is on the order of $1$ to $2$ hours around the absorption phase. For theophylline a short $\ell = 0.5$\,hr is chosen. For warfarin, we use $\ell = 24$\,hr, which is well inside the $58$\,hr elimination half-life of the drug. 

For the simulation study we use the Mat\'ern-5/2 kernel with $\sigma_f$ prior median $s_{\sigma_f} = 0.6$\,mg/L and the length scale fixed at $\ell = 4$\,hr. In Scenarios~7--9 the discrepancy of the data-generating process uses the same scale and length scale, $\tau = 0.6$\,mg/L and $\ell_\delta = 4$\,hr.

\paragraph{Exposure windows.}
The chemotherapy analyses compute AUC over $[0,7]$\,hr for 5-FU and $[0,6]$\,hr for capecitabine.

\section{Quadrature for AUC inference}
\label{sec:auc-numerics}

Numerical quadrature is an approach that approximates an integral by a weighted sum of function evaluations. Given locations (known as nodes) $u_1,\ldots,u_q$ and weights $\omega_1,\ldots,\omega_q$, a quadrature rule approximates $\int_a^b f(t)\,dt$ by $\sum_{v=1}^q\omega_v f(u_v)$. The nodes specify where the function is evaluated, and the weights determine each evaluation's contribution to the integral. Gaussian quadrature selects nodes and weights using polynomials to obtain accurate integration of smooth functions.

Writing $U_q=(u_1,\ldots,u_q)^\top$ and $\omega_q=(\omega_1,\ldots,\omega_q)^\top$, and recalling that $C$ denotes the concentration--time curve, the quadrature approximation to the AUC is
\begin{equation}
A_q=\omega_q^\top C(U_q).
\label{eq:quad-auc}
\end{equation}
Applying the rule to the posterior mean and covariance in Theorem~\ref{thm:auc-gaussian} gives
\[
\mu_{A,q}=\omega_q^\top\mu(U_q),\qquad
\sigma_{A,q}^2=\omega_q^\top K^+(U_q,U_q)\omega_q.
\]
These calculations approximate the integrals defining the Gaussian AUC posterior without requiring posterior path simulation. Gaussianity follows from the linearity of the functional, while the quadrature approach provides a way to numerically evaluate the resulting integral. The following results give the distribution of $A_q$ and its consistency for estimating $\AUC_{a:b}$.

\subsection{Quadrature \AUC{} distribution}

\begin{theorem}[Quadrature \AUC{} distribution]
\label{thm:S-quad-auc}
Let $U_q = (u_1, \ldots, u_q)^\top$ be quadrature nodes and $\omega_q = (\omega_1, \ldots, \omega_q)^\top$ be quadrature weights. Define $A_q = \omega_q^\top C(U_q) = \sum_{v=1}^q \omega_v C(u_v)$. Then
\[
A_q \mid Y = y \sim \Normal\!\left(\omega_q^\top \mu(U_q),\, \omega_q^\top K^+(U_q, U_q) \omega_q\right).
\]
\end{theorem}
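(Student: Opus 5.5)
The plan is to reduce the claim to the finite-dimensional marginal of the posterior process supplied by Theorem~\ref{thm:gp-posterior}, and then apply the elementary fact that a fixed linear map of a multivariate normal vector is normal. The quadrature nodes $U_q$ and weights $\omega_q$ are deterministic and do not depend on $Y$. Hence $A_q$ is a fixed linear combination of finitely many evaluations of $C$. No path regularity, measurability of integrals, or limiting argument is needed, in contrast with Theorem~\ref{thm:auc-gaussian}.

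\emph{Step 1.} By Theorem~\ref{thm:gp-posterior}, conditional on $Y=y$ the curve $C$ is $\GP(\mu,K^+)$. By the definition of a Gaussian process in Section~\ref{sec:gp}, the vector $C(U_q)=(C(u_1),\ldots,C(u_q))^\top$ is therefore multivariate normal given $Y=y$. Its mean is $\mu(U_q)$ and its covariance is the matrix $K^+(U_q,U_q)$ with $(r,s)$ entry $K^+(u_r,u_s)$. Repeated nodes cause no difficulty, since a degenerate (singular-covariance) normal is still a normal.

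\emph{Step 2.} Write $A_q=\omega_q^\top C(U_q)$. For any multivariate normal $X\sim\Normal(\mu_X,\Sigma_X)$ and fixed vector $w$, the scalar $w^\top X$ is $\Normal(w^\top\mu_X,\,w^\top\Sigma_X w)$. One can see this from the characteristic function: $\E e^{i\lambda w^\top X}=\exp(i\lambda w^\top\mu_X-\tfrac12\lambda^2 w^\top\Sigma_X w)$. Setting $w=\omega_q$ and $X=C(U_q)\mid Y=y$ gives the stated mean $\omega_q^\top\mu(U_q)$ and variance $\omega_q^\top K^+(U_q,U_q)\omega_q$.

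\emph{Step 3.} Finally, I would remark that $\omega_q^\top K^+(U_q,U_q)\omega_q\ge0$, because $K^+$ is a valid covariance kernel (it is the conditional covariance). The variance is therefore well defined, and if it equals zero the distribution is a point mass, which is read as a degenerate normal.

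None of these steps is a real obstacle. The only point needing care is to state that conditioning on $Y=y$ is handled entirely by Theorem~\ref{thm:gp-posterior}, so that all finite-dimensional posterior marginals are Gaussian with the given moments. The rest is the linear-image property of Gaussian vectors.
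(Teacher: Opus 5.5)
Your proposal is correct and matches the paper's proof: the paper likewise takes the finite-dimensional marginal $C(U_q)\mid Y=y\sim\Normal_q(\mu(U_q),K^+(U_q,U_q))$ from the posterior GP theorem. It then concludes that the fixed linear combination $\omega_q^\top C(U_q)$ is normal with mean $\omega_q^\top\mu(U_q)$ and variance $\omega_q^\top K^+(U_q,U_q)\omega_q$. Your extra remarks on degenerate covariances and the characteristic-function justification are harmless refinements that the paper leaves implicit.
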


\begin{proof}
By Theorem~\ref{thm:S-gp-posterior}, $C(U_q) \mid Y = y \sim \Normal_q(\mu(U_q), K^+(U_q, U_q))$. Then $A_q = \omega_q^\top C(U_q)$ is a linear combination of a multivariate normal vector, hence normally distributed with
\[
\E(A_q \mid Y = y) = \omega_q^\top \mu(U_q),
\qquad
\Var(A_q \mid Y = y) = \omega_q^\top K^+(U_q, U_q) \omega_q.
\]
\end{proof}

\subsection{Quadrature consistency}

\begin{theorem}[Quadrature consistency]
\label{thm:S-quad-consistency}
Assume $C \mid Y = y$ has almost surely continuous sample paths on $[a, b]$. Let $Q_q(f) = \sum_{v=1}^q \omega_{vq} f(u_{vq})$ be a sequence of quadrature rules satisfying:
\begin{itemize}
\item[(i)] $Q_q(f) \to \int_a^b f(t)\, dt$ for every continuous function $f$ on $[a, b]$;
\item[(ii)] $\Lambda := \sup_q \sum_{v=1}^q |\omega_{vq}| < \infty$.
\end{itemize}
Define $A_q = Q_q(C)$ and $A = \int_a^b C(t)\, dt$. Then $A_q \to A$ almost surely. If additionally $\E[\sup_{t \in [a,b]} |C(t)| \mid Y = y] < \infty$, then $A_q \to A$ in $L^1$. If $\E[\sup_{t \in [a,b]} |C(t)|^2 \mid Y = y] < \infty$, then $A_q \to A$ in $L^2$.
\end{theorem}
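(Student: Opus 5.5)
The plan is to argue pathwise first, then upgrade to $L^1$ and $L^2$ by dominated convergence.

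\emph{Almost sure convergence.} Fix an outcome $\omega$ in the probability-one event on which $t\mapsto C(t)$ is continuous on $[a,b]$. For that outcome the realized path $f=C(\cdot)(\omega)$ is a deterministic continuous function. Hypothesis (i) then gives $Q_q(f)\to\int_a^b f(t)\,dt$ directly. Hence $A_q\to A$ on this event, so almost surely under the posterior. Hypothesis (ii) is not needed for this step; it enters only in the moment upgrades. Measurability is routine: $A_q$ is a finite linear combination of coordinates, and $A$ is the pointwise limit of Riemann sums of a continuous path.

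\emph{Domination.} For continuous paths,
\[
|A_q|\le\sum_{v=1}^q|\omega_{vq}|\,|C(u_{vq})|\le\Lambda\,M,
\qquad |A|\le(b-a)M,
\qquad M:=\sup_{t\in[a,b]}|C(t)|.
\]
Therefore
\[
|A_q-A|\le(\Lambda+b-a)\,M
\]
uniformly in $q$. Note that $M$ is measurable, since by continuity it equals the supremum over rational $t$ together with the endpoints.

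\emph{$L^1$ convergence.} Assume $\E[M\mid Y=y]<\infty$. Then $|A_q-A|\to0$ almost surely and is dominated by the integrable variable $(\Lambda+b-a)M$. Dominated convergence gives $\E|A_q-A|\to0$.

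\emph{$L^2$ convergence.} Assume $\E[M^2\mid Y=y]<\infty$. Apply the same argument to $|A_q-A|^2\le(\Lambda+b-a)^2M^2$, which again is dominated by an integrable variable.

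\emph{Remarks.} The only real point to get right is the uniform-in-$q$ domination, which is exactly where condition (ii) is used. Without it, condition (i) alone still yields the almost sure limit. The moment assumptions are in fact automatic for a Gaussian process with continuous paths: by Fernique's theorem, or by the concentration argument of Theorem~\ref{thm:cmax-conc} applied to $\sup|C|$, $M$ has Gaussian tails. Since they are stated as hypotheses, I would simply use them as given.
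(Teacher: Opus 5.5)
Your proof is correct and follows the paper's argument essentially step for step: you apply (i) pathwise for the almost-sure limit, bound $|A_q-A|\le(\Lambda+b-a)\sup_{t\in[a,b]}|C(t)|$ uniformly in $q$, and use dominated convergence for the $L^1$ and $L^2$ statements. One small caveat on your side remark: Theorem~\ref{thm:cmax-conc} assumes that $\E(C_{\max}\mid Y=y)$ is finite, so it cannot show that the moment conditions hold automatically; only Fernique's theorem (or Borell--TIS) gives that, and it does not affect your proof since you use the stated hypotheses.
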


\begin{proof}
For almost every sample path, $t \mapsto C(t)$ is continuous on $[a, b]$. Hence by (i), $Q_q(C) \to \int_a^b C(t)\, dt$ for that path, so $A_q \to A$ almost surely.

For $L^p$ convergence, bound the deviation pathwise:
\[
|A_q| = \Big|\sum_{v=1}^q \omega_{vq} C(u_{vq})\Big|
\le \Big(\sum_{v=1}^q |\omega_{vq}|\Big) \sup_{t \in [a,b]} |C(t)|
\le \Lambda \sup_t |C(t)|,
\]
and $|A| \le (b - a) \sup_t |C(t)|$. Thus
\[
|A_q - A| \le \{\Lambda + (b - a)\} \sup_t |C(t)|.
\]
If $\sup_t |C(t)|$ is integrable, the dominated convergence theorem applied to $|A_q - A|$ gives $L^1$ convergence. If $\sup_t |C(t)|^2$ is integrable, $|A_q - A|^2$ is dominated by an integrable random variable (with constant $\{\Lambda + (b - a)\}^2$), and dominated convergence yields $L^2$ convergence.
\end{proof}

\section{Proofs of theoretical results}
\label{sec:proofs}

We now state supplemental theoretical results as well as related proofs. Throughout this section we adopt the notation of Section~\ref{sec:model} of the main paper. Unless stated otherwise, $C \sim \GP(m, k)$ on $\Tcal = [a, b]$, observations satisfy $Y = C(T) + \epsilon$ with $\epsilon \sim \Normal(0, R)$ and $R$ positive definite, and we condition on $Y = y$. The posterior is $C \mid Y = y \sim \GP(\mu, K^+)$ with
\[
\begin{aligned}
\mu(t)
  &=m(t)+k(t,T)(K_{TT}+R)^{-1}\{y-m(T)\},\\
K^+(s,t)
  &=k(s,t)-k(s,T)(K_{TT}+R)^{-1}k(T,t),
\end{aligned}
\]
where $K_{TT} = k(T, T)$ and $k(t, T) = (k(t, t_1), \ldots, k(t, t_n))$.

\subsection{Posterior GP conjugacy}
For self-containment, we give a standard result below on the posterior distribution of a Gaussian process. See \citet{rasmussen2006gaussian} for an authoritative reference. 
\begin{theorem}[Posterior Gaussian process; corresponds to Theorem~\ref{thm:gp-posterior} of the main paper]
\label{thm:S-gp-posterior}
Assume $C \sim \GP(m, k)$ and $Y = C(T) + \epsilon$ with $\epsilon \sim \Normal(0, R)$, $R$ positive definite, and $\epsilon$ independent of $C$. Let $K_{TT} = k(T, T)$, $k(t, T) = \{k(t, t_1), \ldots, k(t, t_n)\}$, and $m(T) = (m(t_1), \ldots, m(t_n))^\top$. Then
\[
C \mid Y = y \sim \GP(\mu, K^+),
\]
where $\mu(t) = m(t) + k(t, T)(K_{TT} + R)^{-1}\{y - m(T)\}$ and $K^+(s, t) = k(s, t) - k(s, T)(K_{TT} + R)^{-1} k(T, t)$.
\end{theorem}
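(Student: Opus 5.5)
The plan is to reduce the statement to finite-dimensional Gaussian conditioning. A Gaussian process is determined by its finite-dimensional distributions, so it suffices to show one thing. For every finite set of query points $U=(u_1,\ldots,u_q)^\top\subset\Tcal$, the conditional law of $C(U)$ given $Y=y$ should be $\Normal\big(\mu(U),K^+(U,U)\big)$, with $\mu$ and $K^+$ as displayed. These laws are automatically consistent across choices of $U$, since they come from a single pair of mean and covariance functions. Kolmogorov's extension theorem then identifies the conditional process as $\GP(\mu,K^+)$.

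First I fix $U$ and write down the joint law of $(C(U),Y)$. Since $C\sim\GP(m,k)$ and $\epsilon\sim\Normal(0,R)$ is independent of $C$, the vector $(C(U)^\top, C(T)^\top, \epsilon^\top)^\top$ is jointly Gaussian. Hence $(C(U)^\top,Y^\top)^\top$ is jointly Gaussian, being a linear image of it. Its mean is $(m(U)^\top,m(T)^\top)^\top$. Its covariance blocks are $k(U,U)$, $\mathrm{Cov}(C(U),Y)=k(U,T)$ (independence kills the cross term with $\epsilon$), and $\Var(Y)=K_{TT}+R$. The matrix $K_{TT}+R$ is positive definite because $K_{TT}$ is positive semidefinite and $R$ is positive definite. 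It is therefore invertible, which is exactly what the conditioning formula needs.

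Next I apply the standard Gaussian conditioning lemma. To keep this self-contained, I would prove it directly. Define
\[
W=C(U)-k(U,T)(K_{TT}+R)^{-1}\{Y-m(T)\}.
\]
A one-line covariance computation gives $\mathrm{Cov}(W,Y)=k(U,T)-k(U,T)(K_{TT}+R)^{-1}(K_{TT}+R)=0$. Since $(W,Y)$ is jointly Gaussian, zero covariance means $W$ is independent of $Y$. The law of $W$ is $\Normal\big(m(U),\,k(U,U)-k(U,T)(K_{TT}+R)^{-1}k(T,U)\big)$. Writing $C(U)=W+k(U,T)(K_{TT}+R)^{-1}\{Y-m(T)\}$ and conditioning on $Y=y$ then gives the claimed mean $\mu(U)$ and covariance $K^+(U,U)$, entrywise matching the stated formulas.

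The only delicate point is measure-theoretic: making sense of ``the posterior distribution of the process'' given $Y=y$, a conditioning event of probability zero. I would handle this by showing that the finite-dimensional conditional laws above form a regular version of the conditional distribution. The decomposition $C(U)=W+\text{(function of }Y)$ with $W\perp Y$ gives this explicitly for each $U$. Consistency of these laws is inherited from the consistency of $\mu$ and $K^+$ as functions, and $K^+$ is positive semidefinite since each $K^+(U,U)$ is a covariance matrix. Together these justify stating the posterior as a GP on $\Tcal$; the remaining algebra is routine.
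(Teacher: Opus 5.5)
Your proposal is correct and follows the paper's proof: for each finite $U$ you establish joint Gaussianity of $(C(U),Y)$, apply multivariate normal conditioning, and conclude by consistency and the Kolmogorov extension theorem. Your version is more self-contained in places the paper treats as standard. You derive the conditioning formula via the decorrelated residual $W$, check that $K_{TT}+R$ is invertible and that $K^+$ is positive semidefinite, and flag the regular-conditional-distribution issue; these refine the same argument rather than change it.
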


\begin{proof}
Let $U = (u_1, \ldots, u_q)$ be any finite set of prediction points. By the definition of a Gaussian process, $C(U) = \{C(u_1), \ldots, C(u_q)\}^\top$ and $C(T)$ are jointly Gaussian. Since $Y = C(T) + \epsilon$ with $\epsilon \sim \Normal(0, R)$ independent of $C$, the stacked vector
\[
\begin{pmatrix} C(U) \\ Y \end{pmatrix}
\]
is jointly Gaussian with mean $(m(U)^\top, m(T)^\top)^\top$ and covariance
\[
\begin{pmatrix}
K_{UU} & K_{UT} \\
K_{TU} & K_{TT} + R
\end{pmatrix}.
\]
The standard multivariate normal conditioning formula gives
\[
\begin{aligned}
C(U)\mid Y=y\sim\Normal\!\Big(
 &m(U)+K_{UT}(K_{TT}+R)^{-1}\{y-m(T)\},\\
 &K_{UU}-K_{UT}(K_{TT}+R)^{-1}K_{TU}\Big).
\end{aligned}
\]
This holds for every finite collection of prediction points $U$, so the conditional finite-dimensional distributions are consistent. By the Kolmogorov extension theorem, these finite-dimensional distributions determine a Gaussian process with mean function $\mu$ and covariance function $K^+$ as stated.
\end{proof}

\subsection{Gaussianity of \AUC}
We now give a result that certifies the Gaussianity of the AUC under our model. 
\begin{theorem}[Gaussianity of AUC; corresponds to Theorem~\ref{thm:auc-gaussian} of the main paper]
\label{thm:S-auc-gaussian}
Under the posterior in Equation~\ref{eq:gp-posterior}, assume that the mean and covariance functions are continuous and that $C$ has almost surely continuous sample paths on the compact interval $[a,b]$, with $a<b$. Then
\[
\AUC_{a:b}\mid Y=y\sim\Normal(\mu_A,\sigma_A^2),
\]
where
\[
\mu_A=\int_a^b\mu(t)\,dt,
\qquad
\sigma_A^2=\int_a^b\int_a^b K^+(s,t)\,ds\,dt.
\]
\end{theorem}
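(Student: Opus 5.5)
The plan is to realize $\AUC_{a:b}$ as an almost-sure limit of Riemann sums, each of which is Gaussian by Theorem~\ref{thm:S-gp-posterior}, and then pass Gaussianity to the limit through characteristic functions. Fix a sequence of uniform partitions $a=t_0^{(q)}<\cdots<t_q^{(q)}=b$ with mesh $(b-a)/q$. Define the Riemann sums $A_q=\sum_{v=1}^q \frac{b-a}{q}\,C(t_v^{(q)})$; these are quadrature rules with positive weights summing to $b-a$, so they satisfy conditions (i) and (ii) of Theorem~\ref{thm:S-quad-consistency}. By Theorem~\ref{thm:S-quad-auc} (equivalently, finite-dimensional Gaussianity from Theorem~\ref{thm:S-gp-posterior}), $A_q\mid Y=y\sim\Normal(\mu_{A,q},\sigma_{A,q}^2)$ with $\mu_{A,q}=\sum_v \frac{b-a}{q}\mu(t_v^{(q)})$ and $\sigma_{A,q}^2=\sum_{v,w}\frac{(b-a)^2}{q^2}K^+(t_v^{(q)},t_w^{(q)})$. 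Path continuity gives, pathwise, $A_q\to A=\int_a^b C(t)\,dt$ almost surely, which is exactly the almost-sure part of Theorem~\ref{thm:S-quad-consistency}. In particular $A$ is a well-defined random variable: it is the pointwise limit of measurable maps.

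Next, continuity of $\mu$ on $[a,b]$ and of $K^+$ on $[a,b]^2$ makes both functions Riemann integrable. Hence $\mu_{A,q}\to\mu_A$ and $\sigma_{A,q}^2\to\sigma_A^2$, the latter being a two-dimensional Riemann sum on the product grid.

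Finally, almost-sure convergence implies convergence in distribution. The characteristic functions therefore satisfy $\E(e^{i\lambda A_q}\mid Y=y)=\exp(i\lambda\mu_{A,q}-\lambda^2\sigma_{A,q}^2/2)\to \E(e^{i\lambda A}\mid Y=y)$ by bounded convergence, for every $\lambda\in\R$. The left side converges to $\exp(i\lambda\mu_A-\lambda^2\sigma_A^2/2)$, so $A\sim\Normal(\mu_A,\sigma_A^2)$. This argument also covers the degenerate case $\sigma_A^2=0$, interpreted as a point mass at $\mu_A$. Note that $\sigma_A^2\ge0$ automatically, since it is a limit of nonnegative variances.

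The only delicate point is the measurability and well-definedness of $A$ together with the justification of the a.s. limit. I will handle it by passing to the full-probability event on which paths are continuous, and by using no moment assumptions. The characteristic-function route avoids any need for the dominated-convergence moment bounds appearing in the $L^p$ statements of Theorem~\ref{thm:S-quad-consistency}. As a sanity check, one could alternatively verify the moments directly via Fubini, e.g. $\E\int C=\int\mu$. That route would require $\E\int|C|<\infty$, which follows from $\int_a^b\E|C(t)|\,dt\le\int_a^b\{|\mu(t)|+\sqrt{K^+(t,t)}\}\,dt<\infty$. This confirms that the identified mean and variance agree with the limits obtained above.
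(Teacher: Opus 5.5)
Your proof is correct. It shares the paper's skeleton: approximate $\AUC_{a:b}$ by Riemann sums, note each sum is Gaussian with mean and variance converging to $\mu_A$ and $\sigma_A^2$, and identify the limit through Gaussian characteristic functions. The difference lies in how you justify the convergence of the Riemann sums.

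The paper centers $C=\mu+X$ and proves $W_n\to W$ in $L^2$. Uniform continuity of $K^+$ on $[a,b]^2$ gives $\sup_t\E|X_n(t)-X(t)|^2\to0$. Cauchy--Schwarz and Tonelli then give $\E|W_n-W|^2\le(b-a)\int_a^b\E|X_n(t)-X(t)|^2\,dt\to0$. Sample-path continuity is used there only to make $W$ exist as a pathwise integral.

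You instead get almost-sure convergence directly from path continuity, which is the almost-sure part of Theorem~\ref{thm:S-quad-consistency}. You then pass to characteristic functions by bounded convergence. No second-moment estimate is needed, and continuity of $K^+$ enters only to identify $\lim\sigma_{A,q}^2$ as the double integral.

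Your route is slightly more elementary and reuses results already in the paper. The paper's mean-square route gives the stronger conclusion that the Riemann sums converge in $L^2$, without the $\E\sup_t|C(t)|^2<\infty$ hypothesis of Theorem~\ref{thm:S-quad-consistency}. It is also the argument that carries over to mean-square integrals of processes that are only mean-square continuous. Your closing Fubini check is correct but not needed for the proof.
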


\begin{proof}
All expectations below are conditional on $Y=y$. Write $C(t)=\mu(t)+X(t)$, where $X\sim\GP(0,K^+)$. Continuous sample paths ensure that $W=\int_a^bX(t)\,dt$ exists almost surely, and $\AUC_{a:b}=\mu_A+W$.

Choose partitions of $[a,b]$ whose mesh tends to zero. Write $h_{j,n}$ for the length of the $j$th interval and choose a tag $\xi_{j,n}$ in that interval. Define the piecewise constant process $X_n(t)=X(\xi_{j,n})$ on the $j$th interval and
\[
W_n=\int_a^bX_n(t)\,dt
   =\sum_jh_{j,n}X(\xi_{j,n}).
\]
Each $W_n$ is a finite linear combination of jointly Gaussian evaluations. Consequently, $W_n\sim\Normal(0,v_n)$, where
\[
v_n=\sum_j\sum_\ell
h_{j,n}h_{\ell,n}K^+(\xi_{j,n},\xi_{\ell,n})
\longrightarrow\sigma_A^2
\]
by continuity of $K^+$.

Uniform continuity of $K^+$ on $[a,b]^2$ implies
\[
\sup_{t\in[a,b]}\E|X_n(t)-X(t)|^2\longrightarrow0.
\]
The Cauchy--Schwarz inequality and Tonelli's theorem therefore give
\[
\E|W_n-W|^2
\le(b-a)\int_a^b\E|X_n(t)-X(t)|^2\,dt
\longrightarrow0.
\]
Thus $W_n\to W$ in $L^2$ and hence in distribution. Since $v_n\to\sigma_A^2$, the Gaussian characteristic functions converge to $\exp(-u^2\sigma_A^2/2)$. Therefore $W\sim\Normal(0,\sigma_A^2)$. Adding $\mu_A$ proves the result.
\end{proof}

\subsection{Maximum functionals}
\label{sec:S-max-functionals}
We now give relevant theoretical results concerning $C_{\max}$. We first give an existence result. 
\begin{theorem}[Existence of $C_{\max}$]
\label{thm:S-cmax-exists}
If $C \mid Y = y$ has almost surely continuous sample paths on the compact interval $\Tcal = [a, b]$, then $C_{\max} = \sup_{t \in \Tcal} C(t)$ is finite almost surely and is attained almost surely; that is, $\argmax_{t \in \Tcal} C(t)$ is nonempty almost surely.
\end{theorem}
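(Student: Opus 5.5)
The argument is pathwise: we apply the extreme value theorem to each continuous realization and then address measurability of the supremum. Let $\Omega_0$ be the event, of conditional probability one given $Y=y$, on which $t\mapsto C(t)$ is continuous on $\Tcal=[a,b]$. Fix $\omega\in\Omega_0$. The map $t\mapsto C(t,\omega)$ is a continuous real-valued function on a compact set. By the Weierstrass extreme value theorem it is bounded, so $\sup_{t\in\Tcal}C(t,\omega)<\infty$, and the supremum is attained at some $t^\star(\omega)\in\Tcal$.

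To see attainment concretely, pick $t_k$ with $C(t_k,\omega)\uparrow\sup_t C(t,\omega)$. Compactness of $[a,b]$ yields a convergent subsequence $t_{k_j}\to t^\star$, and continuity gives $C(t^\star,\omega)=\sup_t C(t,\omega)$. Hence $\argmax_{t\in\Tcal}C(t,\omega)$ is nonempty for every $\omega\in\Omega_0$. Since $\Prob(\Omega_0\mid Y=y)=1$, both finiteness and attainment hold almost surely.

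The only subtle point is that $C_{\max}$ is a supremum over uncountably many random variables, so it need not be measurable a priori. I would handle this by showing that on $\Omega_0$,
\[
C_{\max}=\sup_{t\in\Tcal\cap\mathbb{Q}'}C(t),
\]
where $\mathbb{Q}'$ is a countable dense subset of $[a,b]$, for example the rationals together with $a$ and $b$. This identity follows from continuity: every $C(t)$ is a limit of $C(t_j)$ with $t_j\in\mathbb{Q}'$. The countable supremum is measurable, so $C_{\max}$ agrees almost surely with a measurable random variable, taking values in $\R$ on $\Omega_0$ after completion or a modification on a null set.

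I expect this measurability remark to be the main thing to state carefully. The finiteness and attainment parts are immediate from compactness and continuity.
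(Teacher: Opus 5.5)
Your proposal is correct and follows the paper's proof: on the probability-one set of continuous paths, apply the extreme value theorem to get finiteness and attainment. Your extra step, reducing $C_{\max}$ to a supremum over a countable dense set to get measurability, is a valid refinement that the paper leaves implicit.
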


\begin{proof}
For almost every sample path, $t \mapsto C(t)$ is continuous on the compact set $[a, b]$. By the extreme value theorem, every continuous function on a compact set is bounded and attains its maximum. Hence for almost every sample path, $C_{\max} = \max_{t \in [a, b]} C(t) < \infty$ and $\argmax_{t \in [a, b]} C(t) \neq \varnothing$.
\end{proof}
 
We now give a result regarding our approach to estimating $C_{\max}$ via grids. 

\begin{theorem}[Finite-grid $C_{\max}$ distribution]
\label{thm:S-finitegrid-cmax}
Let $U_q = (u_1, \ldots, u_q)^\top$ and suppose $C(U_q) \mid Y = y \sim \Normal_q(\mu_q, K_q)$. Define $M_q = \max_{v=1, \ldots, q} C(u_v)$. Then
\[
\Prob(M_q \le x \mid Y = y) = \Phi_q(x \mathbf{1}_q;\, \mu_q, K_q),
\]
where $\Phi_q(\cdot;\, \mu_q, K_q)$ is the $q$-dimensional Gaussian CDF and $x \mathbf{1}_q = (x, \ldots, x)^\top$.
\end{theorem}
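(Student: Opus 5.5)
The plan is to reduce the event about the maximum to a joint event about the coordinates, and then read off the probability from the finite-dimensional Gaussian law. The starting point is Theorem~\ref{thm:S-gp-posterior}: applied with prediction points $U_q$, it gives $C(U_q)\mid Y=y\sim\Normal_q(\mu_q,K_q)$, where $\mu_q=\mu(U_q)$ and $K_q=K^+(U_q,U_q)$. So the hypothesis of the statement is automatically satisfied under the model, and we may work directly with this $q$-dimensional conditional law.

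The key step is an elementary set identity. For any real $x$,
\[
\{M_q\le x\}=\bigcap_{v=1}^q\{C(u_v)\le x\}=\{C(U_q)\le x\mathbf 1_q\},
\]
where the inequality between vectors is read componentwise. This holds because a finite maximum is at most $x$ exactly when every term is at most $x$. Both sides are events measurable with respect to $C(U_q)$, since $M_q$ is a continuous function of that vector, so conditioning on $Y=y$ applies equally to both.

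It then remains to take conditional probabilities. By definition of the multivariate CDF, $\Prob(C(U_q)\le x\mathbf 1_q\mid Y=y)$ is the $\Normal_q(\mu_q,K_q)$ distribution function evaluated at $x\mathbf 1_q$, which is $\Phi_q(x\mathbf 1_q;\mu_q,K_q)$.

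The only point needing care is the case where $K_q$ is singular, for example when nodes coincide or an observation is noiseless at a node. In that case $\Phi_q$ should be understood as the CDF of the possibly degenerate Gaussian measure rather than as an integral of a density. The identity above does not use a density at all, so the argument goes through unchanged. I do not expect any substantive obstacle beyond stating this convention.
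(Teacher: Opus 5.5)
Your proposal is correct and follows the paper's own proof essentially verbatim: the paper also identifies $\{M_q \le x\}$ with the componentwise event $\{C(U_q)\le x\mathbf{1}_q\}$ and then reads off its probability as the $\Normal_q(\mu_q,K_q)$ distribution function evaluated at $x\mathbf{1}_q$. Your remark on singular $K_q$ is a harmless addition, though one of your examples does not apply here: the paper assumes $R$ is positive definite, so noiseless observations are excluded, and singularity could arise, for instance, from coinciding nodes.
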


\begin{proof}
The event $\{M_q \le x\}$ is equivalent to $\bigcap_v \{C(u_v) \le x\}$, that is, $\{C(U_q) \le x \mathbf{1}_q\}$ componentwise. Since $C(U_q) \mid Y = y \sim \Normal_q(\mu_q, K_q)$, this probability is by definition $\Phi_q(x \mathbf{1}_q;\, \mu_q, K_q)$.
\end{proof}
We now give two convergence results. 
\begin{theorem}[Grid convergence of $C_{\max}$]
\label{thm:S-grid-conv}
Suppose $C \mid Y = y$ has almost surely continuous sample paths on $\Tcal = [a, b]$. Let $G_q \subset \Tcal$ be a sequence of grids satisfying
\[
\Delta(G_q) := \sup_{t \in \Tcal} \min_{u \in G_q} |t - u| \to 0,
\]
and define $M_q = \max_{u \in G_q} C(u)$ and $C_{\max} = \sup_{t \in \Tcal} C(t)$. Then $M_q \to C_{\max}$ almost surely. If additionally $\E[\sup_t |C(t)| \mid Y = y] < \infty$, then $M_q \to C_{\max}$ in $L^1$.
\end{theorem}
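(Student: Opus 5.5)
The argument works pathwise and then passes to $L^1$ by dominated convergence. Fix a sample path $\omega$ for which $t\mapsto C(t)$ is continuous on $[a,b]$; by hypothesis such paths have conditional probability one.

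\emph{Step 1 (almost sure convergence).} By Theorem~\ref{thm:S-cmax-exists}, the supremum is attained at some $t^\star=t^\star(\omega)\in[a,b]$, so $C_{\max}=C(t^\star)$. Since $G_q\subset\Tcal$, we have the upper bound $M_q\le C_{\max}$ for every $q$.

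For the lower bound, the definition of $\Delta(G_q)$ and compactness of the finite grid give, for each $q$, a point $u_q\in G_q$ with $|u_q-t^\star|\le\Delta(G_q)\to0$. Continuity of the path at $t^\star$ then gives $C(u_q)\to C(t^\star)$. Because $C(u_q)\le M_q\le C_{\max}$, squeezing yields $M_q\to C_{\max}$ for this path. As such paths have probability one, $M_q\to C_{\max}$ almost surely.

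Uniform continuity of the path on $[a,b]$ gives a quantitative version of the same bound. Writing $w_\omega(\delta)$ for the modulus of continuity, we have $0\le C_{\max}-M_q\le w_\omega(\Delta(G_q))$. This is not needed for the statement.

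\emph{Step 2 ($L^1$ convergence).} We bound the difference pathwise:
\[
|M_q-C_{\max}|\le |M_q|+|C_{\max}|\le 2\sup_{t\in[a,b]}|C(t)|,
\]
using $|\max_{u\in G_q}C(u)|\le\sup_t|C(t)|$ and the same bound for the supremum over $\Tcal$. Under the extra assumption $\E[\sup_t|C(t)|\mid Y=y]<\infty$, the right-hand side is an integrable dominating variable. Dominated convergence together with Step 1 then gives $\E[|M_q-C_{\max}|\mid Y=y]\to0$. This mirrors the argument in the proof of Theorem~\ref{thm:S-quad-consistency}.

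\emph{Main obstacle.} The only delicate point is measurability. $C_{\max}$ is a supremum over an uncountable set, so we must check it is a random variable. Under almost sure path continuity it equals the supremum over a countable dense subset of $[a,b]$ (for instance the rationals together with $a$ and $b$), which makes it measurable. Each $M_q$ is a maximum of finitely many random variables and is therefore measurable. With that settled, the remaining steps are routine.
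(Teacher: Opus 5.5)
Your proposal is correct and follows the paper's proof essentially step for step. Both arguments use the pathwise squeeze $C(u_q)\le M_q\le C_{\max}$ with $u_q\in G_q$ near a maximizer $t^\star$, and then dominated convergence with the bound $|M_q-C_{\max}|\le 2\sup_t|C(t)|$. The differences are cosmetic: you invoke continuity at $t^\star$ where the paper invokes uniform continuity (only continuity at $t^\star$ is actually used there), and you add a measurability remark that the paper omits.
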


\begin{proof}
Fix a sample path $C(\cdot)$ that is continuous on $[a, b]$. By Theorem~\ref{thm:S-cmax-exists}, there exists $t^\star \in [a, b]$ with $C(t^\star) = C_{\max}$. Because $C$ is continuous on the compact interval $[a, b]$ it is uniformly continuous, so for every $\varepsilon > 0$ there exists $\delta > 0$ with $|t - s| < \delta \Rightarrow |C(t) - C(s)| < \varepsilon$. Since $\Delta(G_q) \to 0$, for all sufficiently large $q$ there exists $u_q \in G_q$ with $|u_q - t^\star| < \delta$, so
\[
C(u_q) > C(t^\star) - \varepsilon = C_{\max} - \varepsilon.
\]
Therefore $M_q = \max_{u \in G_q} C(u) \ge C(u_q) > C_{\max} - \varepsilon$. Also $M_q \le C_{\max}$ since $G_q \subset \Tcal$. For all sufficiently large $q$, $0 \le C_{\max} - M_q < \varepsilon$, and as $\varepsilon > 0$ was arbitrary, $M_q \to C_{\max}$. This holds for every continuous path, so convergence is almost sure.

For $L^1$, observe $|M_q-C_{\max}|\le|M_q|+|C_{\max}|\le2\sup_t|C(t)|$. If $\sup_t |C(t)|$ is integrable, dominated convergence yields $\E[|M_q-C_{\max}| \mid Y = y] \to 0$.
\end{proof}

\begin{corollary}[Distribution-function convergence for $C_{\max}$]
\label{cor:S-cmax-quantile}
Under the assumptions of Theorem~\ref{thm:S-grid-conv}, the distribution functions $F_q$ of $M_q$ converge to the distribution function $F$ of $C_{\max}$ at every continuity point of $F$.
\end{corollary}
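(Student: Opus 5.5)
The corollary follows from the standard implication that almost sure convergence gives convergence in distribution. Theorem~\ref{thm:S-grid-conv} gives $M_q \to C_{\max}$ almost surely under the posterior $\Prob(\cdot \mid Y=y)$. Almost sure convergence implies convergence in probability, and that implies weak convergence. By the portmanteau characterization, weak convergence on $\R$ is equivalent to pointwise convergence of distribution functions at continuity points of the limit. So $F_q(x) \to F(x)$ at every continuity point $x$ of $F$.

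For a self-contained argument I will prove the pointwise claim directly with the usual sandwich. Fix a continuity point $x$ of $F$ and $\varepsilon>0$. For each $q$,
\[
\{M_q \le x\} \subseteq \{C_{\max} \le x+\varepsilon\} \cup \{|M_q - C_{\max}| > \varepsilon\},
\qquad
\{C_{\max} \le x-\varepsilon\} \subseteq \{M_q \le x\} \cup \{|M_q - C_{\max}| > \varepsilon\}.
\]
Taking conditional probabilities gives
\[
F(x-\varepsilon) - p_q(\varepsilon) \le F_q(x) \le F(x+\varepsilon) + p_q(\varepsilon),
\]
where $p_q(\varepsilon) = \Prob(|M_q - C_{\max}|>\varepsilon \mid Y=y)$. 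Since $|M_q - C_{\max}| \to 0$ almost surely, dominated convergence applied to the indicators $\mathbf{1}\{|M_q - C_{\max}|>\varepsilon\}$, which are bounded by $1$, gives $p_q(\varepsilon)\to 0$. Taking $\limsup$ and $\liminf$ in $q$, then letting $\varepsilon \downarrow 0$ and using continuity of $F$ at $x$, yields $F_q(x)\to F(x)$.

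Two side points. First, $C_{\max}$ is a finite random variable almost surely by Theorem~\ref{thm:S-cmax-exists}, so $F$ is a proper distribution function. Second, in this setting $M_q \le C_{\max}$ pathwise, so $F_q(x) \ge F(x)$ for every $q$; only the lower side of the sandwich actually needs continuity of $F$ at $x$.

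There is no real obstacle here. The one point needing care is measurability of $C_{\max}$. It holds because, on the almost sure set of continuous paths, $C_{\max}$ equals the supremum of $C$ over the rationals in $[a,b]$ together with $a$ and $b$, which is a countable supremum of measurable random variables. Alternatively, it is the almost sure limit of the measurable $M_q$.
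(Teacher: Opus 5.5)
Your proposal is correct and follows the paper's proof: almost-sure convergence $M_q \to C_{\max}$ from Theorem~\ref{thm:S-grid-conv} gives convergence in distribution, which is equivalent to convergence of distribution functions at continuity points of $F$, and your sandwich argument spells out that last step correctly. Your side remark can go one step further: $M_q \le C_{\max}$ pathwise gives $F_q(x)\ge F(x)$ for every $x$, and the upper bound $F_q(x)\le F(x+\varepsilon)+p_q(\varepsilon)$ needs only right-continuity of $F$, so $F_q(x)\to F(x)$ actually holds at every $x$, not just at continuity points.
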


\begin{proof}
Theorem~\ref{thm:S-grid-conv} gives $M_q \to C_{\max}$ almost surely, hence in distribution. Convergence in distribution is equivalent to convergence of distribution functions at all continuity points of the limit.
\end{proof}

\subsection{Maximum concentration inequalities}
\label{sec:S-conc-ineq}

We give relevant concentration inequalities here regarding the maxima of Gaussian distributions in this Section. We first document an upper tail bound and a lower tail bound, which together yield a two-sided interval result. 

\subsubsection*{Elementary finite-grid bounds}

\begin{theorem}[Elementary finite-grid upper-tail bound]
\label{thm:S-elem-upper}
Let $X = (X_1, \ldots, X_q)^\top \sim \Normal_q(\mu_X, \Sigma)$, $\mu_X = (\mu_{X,1}, \ldots, \mu_{X,q})^\top$, $M_q = \max_j X_j$, $m_q = \max_j \mu_{X,j}$, and $\sigma_q^2 = \max_j \Sigma_{jj}>0$. Then for every $r \ge 0$,
\[
\Prob(M_q \ge m_q + r) \le q \exp\!\left(-\frac{r^2}{2 \sigma_q^2}\right).
\]

\end{theorem}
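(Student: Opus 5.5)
The plan is a union bound over the $q$ coordinates, combined with a one-dimensional Gaussian tail bound for each coordinate. The starting observation is that $M_q \ge m_q + r$ holds only if $X_j \ge m_q + r$ for some $j$. Hence
\[
\Prob(M_q \ge m_q + r) \le \sum_{j=1}^q \Prob(X_j \ge m_q + r).
\]
Since $\mu_{X,j} \le m_q$ for every $j$, each event $\{X_j \ge m_q + r\}$ is contained in $\{X_j - \mu_{X,j} \ge r\}$. This reduces each summand to a centred Gaussian tail at level $r$.

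Next I bound each coordinate tail. If $\Sigma_{jj} = 0$, then $X_j = \mu_{X,j}$ almost surely. For $r > 0$ the probability is then zero. For $r = 0$ the claimed bound is $q \ge 1$, which holds trivially, so the whole statement is immediate at $r=0$ and I will assume $r>0$. If $\Sigma_{jj} > 0$, I will use the standard Chernoff bound for $Z \sim \Normal(0,\Sigma_{jj})$. Markov's inequality applied to $e^{\lambda Z}$ gives
\[
\Prob(Z \ge r) \le \inf_{\lambda>0} \exp(-\lambda r + \lambda^2 \Sigma_{jj}/2) = \exp\{-r^2/(2\Sigma_{jj})\}.
\]
Because $\Sigma_{jj} \le \sigma_q^2$, this is at most $\exp\{-r^2/(2\sigma_q^2)\}$.

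Finally, summing the $q$ identical bounds gives the factor $q$ and the stated inequality. There is no real obstacle here. The only points needing care are the degenerate-variance coordinates and the replacement of $\mu_{X,j}$ by the larger $m_q$, which is what makes the bound uniform in $j$. I would note that the bound depends on $q$, so it is not uniform in the grid. This is why the continuous-domain result, Theorem~\ref{thm:cmax-conc}, will instead need the Borell--TIS/Gaussian concentration route, which centres at the expectation rather than at $m_q$.
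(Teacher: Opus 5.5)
Your proof is correct and follows the paper's own argument step for step: the union bound over coordinates, the reduction $\{X_j \ge m_q + r\} \subseteq \{X_j - \mu_{X,j} \ge r\}$ via $\mu_{X,j} \le m_q$, separate handling of $r=0$ and of coordinates with $\Sigma_{jj}=0$, and the one-dimensional Gaussian tail bound combined with $\Sigma_{jj} \le \sigma_q^2$. The only difference is that you derive the Gaussian tail bound explicitly through the Chernoff argument, whereas the paper simply invokes it.
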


\begin{proof}
For $r=0$, the bound is trivial because $q\ge1$. Suppose $r>0$.
The event $\{M_q \ge m_q + r\}$ implies $X_j \ge m_q + r$ for some $j$, which in turn implies $X_j - \mu_{X,j} \ge r$ since $\mu_{X,j} \le m_q$. By the union bound,
\[
\Prob(M_q \ge m_q + r) \le \sum_{j=1}^q \Prob(X_j - \mu_{X,j} \ge r).
\]
If $\Sigma_{jj}=0$, then $X_j-\mu_{X,j}=0$ almost surely, so $\Prob(X_j-\mu_{X,j}\ge r)=0$. For each coordinate with $\Sigma_{jj}>0$, the Gaussian tail bound gives
\[
\Prob(X_j - \mu_{X,j} \ge r) \le \exp\!\left(-\frac{r^2}{2 \Sigma_{jj}}\right) \le \exp\!\left(-\frac{r^2}{2 \sigma_q^2}\right).
\]
Summing over $j$ yields the claim.
\end{proof}

\begin{theorem}[Elementary finite-grid lower-tail bound]
\label{thm:S-elem-lower}
Under the assumptions of Theorem~\ref{thm:S-elem-upper}, for every $r \ge 0$,
\[
\Prob(M_q \le m_q - r) \le \exp\!\left(-\frac{r^2}{2 \sigma_q^2}\right).
\]
\end{theorem}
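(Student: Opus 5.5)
The plan is to reduce the lower tail of the maximum to the lower tail of a single, well-chosen coordinate. This is simpler than the upper-tail argument of Theorem~\ref{thm:S-elem-upper} and needs no union bound, which is why the factor $q$ disappears.

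First, dispose of $r=0$: the right-hand side equals $1$, so the bound is trivial. For $r>0$, fix an index $j^\star$ attaining the maximal mean, so that $\mu_{X,j^\star}=m_q$; it exists because the index set is finite. Since $M_q\ge X_{j^\star}$, we have the event inclusion
\[
\{M_q\le m_q-r\}\subseteq\{X_{j^\star}\le m_q-r\}=\{X_{j^\star}-\mu_{X,j^\star}\le -r\},
\]
and therefore $\Prob(M_q\le m_q-r)\le\Prob(X_{j^\star}-\mu_{X,j^\star}\le -r)$.

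Next, bound this single-coordinate probability, splitting on the variance. If $\Sigma_{j^\star j^\star}=0$, then $X_{j^\star}=\mu_{X,j^\star}$ almost surely, so the probability is $0$ for $r>0$. If $\Sigma_{j^\star j^\star}>0$, then $X_{j^\star}-\mu_{X,j^\star}\sim\Normal(0,\Sigma_{j^\star j^\star})$. By symmetry and the same Gaussian (Chernoff) tail bound used in Theorem~\ref{thm:S-elem-upper}, $\Prob(Z\ge r)\le\exp(-r^2/(2s^2))$ for $Z\sim\Normal(0,s^2)$, obtained by optimizing $e^{-\lambda r}\E e^{\lambda Z}=e^{-\lambda r+\lambda^2 s^2/2}$ at $\lambda=r/s^2$. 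This gives
\[
\Prob(X_{j^\star}-\mu_{X,j^\star}\le -r)\le\exp\!\left(-\frac{r^2}{2\Sigma_{j^\star j^\star}}\right)\le\exp\!\left(-\frac{r^2}{2\sigma_q^2}\right),
\]
where the last step uses $\Sigma_{j^\star j^\star}\le\sigma_q^2$ and the monotonicity of $s^2\mapsto\exp(-r^2/(2s^2))$.

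There is no real obstacle. The only point needing care is to choose the coordinate by maximal \emph{mean} rather than maximal variance, so that the centering matches $m_q$ exactly, and to handle the degenerate zero-variance case separately.
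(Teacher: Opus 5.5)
Your proposal is correct and follows essentially the same route as the paper: you treat $r=0$ as trivial, pick $j^\star$ maximizing the mean, bound the lower tail of $M_q$ by that of $X_{j^\star}$, dispose of the zero-variance case, and apply the Gaussian tail bound with $\Sigma_{j^\star j^\star}\le\sigma_q^2$. The only addition is that you derive the Chernoff bound explicitly, whereas the paper simply cites the Gaussian lower-tail bound.
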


\begin{proof}
For $r=0$, the bound is trivial. Suppose $r>0$.
Pick any $j^\star \in \argmax_j \mu_{X,j}$, so $\mu_{X,j^\star} = m_q$. The event $\{M_q \le m_q - r\}$ implies $X_{j^\star} \le m_q - r = \mu_{X,j^\star} - r$. If $\Sigma_{j^\star j^\star}=0$, then $X_{j^\star}=m_q$ almost surely, so $\Prob(M_q\le m_q-r)=0$. Otherwise, $X_{j^\star}-\mu_{X,j^\star}\sim\Normal(0,\Sigma_{j^\star j^\star})$, and the Gaussian lower-tail bound gives
\[
\Prob(X_{j^\star} - \mu_{X,j^\star} \le -r) \le \exp\!\left(-\frac{r^2}{2 \Sigma_{j^\star j^\star}}\right) \le \exp\!\left(-\frac{r^2}{2 \sigma_q^2}\right).
\]
\end{proof}

\begin{corollary}[Finite-grid $C_{\max}$ interval]
\label{cor:S-elem-cmax}
Under the assumptions of Theorem~\ref{thm:S-elem-upper}, let $\alpha\in(0,1)$. With probability at least $1 - \alpha$, $M_q \le m_q + \sigma_q \sqrt{2 \log(q / \alpha)}$. With probability at least $1 - \alpha$, $M_q \ge m_q - \sigma_q \sqrt{2 \log(1 / \alpha)}$. Hence with probability at least $1 - 2\alpha$,
\[
m_q - \sigma_q \sqrt{2 \log(1/\alpha)} \le M_q \le m_q + \sigma_q \sqrt{2 \log(q/\alpha)}.
\]
\end{corollary}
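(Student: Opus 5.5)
The corollary follows by inverting the two one-sided tail bounds, Theorem~\ref{thm:S-elem-upper} and Theorem~\ref{thm:S-elem-lower}, at suitably chosen thresholds and then combining the two events with a union bound.

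\emph{Upper tail.} Set $r_u=\sigma_q\sqrt{2\log(q/\alpha)}$. Since $q\ge1$ and $\alpha\in(0,1)$, we have $q/\alpha>1$, so $r_u>0$ is well defined. Theorem~\ref{thm:S-elem-upper} then gives
\[
\Prob(M_q\ge m_q+r_u)\le q\exp\!\left(-\frac{r_u^2}{2\sigma_q^2}\right)=q\cdot\frac{\alpha}{q}=\alpha .
\]
Hence $\Prob(M_q<m_q+r_u)\ge1-\alpha$, and in particular $M_q\le m_q+r_u$ with probability at least $1-\alpha$.

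\emph{Lower tail.} Set $r_\ell=\sigma_q\sqrt{2\log(1/\alpha)}$, which is positive because $\alpha<1$. Theorem~\ref{thm:S-elem-lower} gives
\[
\Prob(M_q\le m_q-r_\ell)\le\exp\!\left(-\frac{r_\ell^2}{2\sigma_q^2}\right)=\alpha .
\]
Thus $M_q>m_q-r_\ell$, and a fortiori $M_q\ge m_q-r_\ell$, with probability at least $1-\alpha$.

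\emph{Two-sided statement.} Let $E_u=\{M_q\ge m_q+r_u\}$ and $E_\ell=\{M_q\le m_q-r_\ell\}$. Each has probability at most $\alpha$, so $\Prob(E_u\cup E_\ell)\le2\alpha$ by the union bound. On the complement of $E_u\cup E_\ell$ we have
\[
m_q-r_\ell\le M_q\le m_q+r_u ,
\]
which is exactly the stated two-sided interval, and this complement has probability at least $1-2\alpha$.

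There is no real obstacle here. The only points needing care are that both logarithms are strictly positive under $\alpha\in(0,1)$ and $q\ge1$, so that the theorems apply with $r>0$, and that the strict-versus-weak inequalities go in the right direction when passing to complements. Both checks are handled above.
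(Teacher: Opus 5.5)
Your proposal is correct and follows essentially the same route as the paper. Both proofs plug $r=\sigma_q\sqrt{2\log(q/\alpha)}$ into the upper-tail bound and $r=\sigma_q\sqrt{2\log(1/\alpha)}$ into the lower-tail bound, then combine the two events with a union bound; your extra checks on positivity of $r$ and on strict versus weak inequalities under complementation are sound.
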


\begin{proof}
Set $r=\sigma_q\sqrt{2\log(q/\alpha)}$ in Theorem~\ref{thm:S-elem-upper} to obtain
\[
q\exp\{-r^2/(2\sigma_q^2)\}=\alpha.
\]
Similarly, setting $r=\sigma_q\sqrt{2\log(1/\alpha)}$ in Theorem~\ref{thm:S-elem-lower} gives
\[
\exp\{-r^2/(2\sigma_q^2)\}=\alpha.
\]
A union bound yields the two-sided statement.
\end{proof}

\subsubsection*{Gaussian Lipschitz concentration}
We now give related Lipschitz concentration results, which can lead to sharper bounds. 
\begin{lemma}[Gaussian Lipschitz concentration]
\label{lem:S-glipschitz}
Let $Z \sim \Normal_d(0, I_d)$ and let $f:\R^d \to \R$ be $L$-Lipschitz with $L>0$: $|f(z) - f(z')| \le L\|z - z'\|_2$. Then for every $r \ge 0$,
\[
\Prob\{f(Z) - \E f(Z) \ge r\} \le \exp\!\left(-\frac{r^2}{2 L^2}\right),
\qquad
\Prob\{f(Z) - \E f(Z) \le -r\} \le \exp\!\left(-\frac{r^2}{2 L^2}\right).
\]
\end{lemma}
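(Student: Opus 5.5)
The plan is the classical Herbst argument built on the Gaussian logarithmic Sobolev inequality. I prefer it to the Maurey--Pisier interpolation trick because it yields the sharp constant $1/(2L^2)$ in the exponent, whereas interpolation only gives $2/(\pi^2L^2)$. It suffices to prove the upper-tail bound: the lower-tail bound follows by applying it to $-f$, which is also $L$-Lipschitz. By the Chernoff bound
\[
\Prob\{f(Z)-\E f(Z)\ge r\}\le \inf_{\lambda>0}\exp\{-\lambda r\}\,\E e^{\lambda\{f(Z)-\E f(Z)\}},
\]
everything reduces to the sub-Gaussian moment generating function bound $\log \E e^{\lambda\{f(Z)-\E f(Z)\}}\le \lambda^2L^2/2$ for all $\lambda>0$. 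Optimising at $\lambda=r/L^2$ then gives $\exp\{-r^2/(2L^2)\}$.

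\emph{Step 1 (integrability and smoothing).} Since $|f(z)|\le |f(0)|+L\|z\|_2$, both $\E|f(Z)|$ and $\E e^{\lambda f(Z)}$ are finite for every $\lambda$. I will first assume $f$ is $C^\infty$ with $\|\nabla f\|_2\le L$ everywhere. For general $f$, set $f_\varepsilon=f*\varphi_\varepsilon$ with a Gaussian mollifier $\varphi_\varepsilon$. Then $f_\varepsilon$ is smooth and $L$-Lipschitz, so $\|\nabla f_\varepsilon\|_2\le L$, and $\sup_z|f_\varepsilon(z)-f(z)|\le L\varepsilon\,\E\|Z'\|_2\to0$. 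Uniform convergence lets the moment generating function bound (equivalently the tail bound at each fixed $r$, via continuity points) pass to the limit $\varepsilon\to0$, using dominated convergence with the exponential moment bound above.

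\emph{Step 2 (Herbst argument).} I will invoke the Gaussian log-Sobolev inequality (Gross): for smooth $g$,
\[
\E[g^2\log g^2]-\E g^2\log \E g^2\le 2\,\E\|\nabla g(Z)\|_2^2 .
\]
Applying it to $g=e^{\lambda f/2}$ and writing $H(\lambda)=\E e^{\lambda f(Z)}$ gives
\[
\lambda H'(\lambda)-H(\lambda)\log H(\lambda)\le \tfrac{\lambda^2}{2}\,\E\{\|\nabla f\|_2^2e^{\lambda f}\}\le \tfrac{\lambda^2L^2}{2}H(\lambda).
\]
Dividing by $\lambda^2H(\lambda)$ shows that $\tfrac{d}{d\lambda}\{\lambda^{-1}\log H(\lambda)\}\le L^2/2$. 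Since $\lambda^{-1}\log H(\lambda)\to \E f(Z)$ as $\lambda\downarrow0$, integrating from $0$ to $\lambda$ yields $\log H(\lambda)\le \lambda\E f(Z)+\lambda^2L^2/2$, which is exactly the required bound. Differentiation under the expectation is justified by the exponential moments from Step 1.

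\emph{Main obstacle.} The substantive input is the log-Sobolev inequality itself, which I would cite as classical (e.g.\ \citet{boucheron2013concentration}). If a self-contained route is preferred, I would instead argue directly with the Ornstein--Uhlenbeck semigroup $P_t$. For a smooth test function $F$, the identity
\[
\E F(Z)\{f(Z)-\E f(Z)\}=\int_0^\infty e^{-t}\,\E\langle\nabla F(Z),P_t\nabla f(Z)\rangle\,dt
\]
holds; taking $F=e^{\lambda f}$ gives $H'(\lambda)-\E f\,H(\lambda)\le \lambda L^2H(\lambda)$. This differential inequality integrates to the same moment generating function bound. The remaining steps, namely Chernoff, the symmetrisation $f\mapsto -f$, and the smoothing limit, are routine.
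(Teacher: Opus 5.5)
Your proposal is correct, and it supplies more than the paper does: the paper's proof of Lemma~\ref{lem:S-glipschitz} is only an attribution to \citet{borell1975brunn} and \citet{sudakov1974extremal}, plus pointers to textbook treatments. Your main route is essentially the proof in \citet{boucheron2013concentration}: the Gaussian log-Sobolev inequality applied to $e^{\lambda f/2}$, the differential inequality $\frac{d}{d\lambda}\{\lambda^{-1}\log H(\lambda)\}\le L^2/2$, then Chernoff with $\lambda=r/L^2$. Your Ornstein--Uhlenbeck alternative, via the covariance identity, reaches the same sharp bound $\log\E e^{\lambda\{f(Z)-\E f(Z)\}}\le\lambda^2L^2/2$ without invoking log-Sobolev at all. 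The works the paper credits argue through Gaussian isoperimetry, which most naturally gives concentration about a median, $\Prob\{f(Z)\ge \mathrm{med}\,f(Z)+r\}\le 1-\Phi(r/L)$. The mean-centered form with the constant $\exp\{-r^2/(2L^2)\}$ is what the paper needs for Theorem~\ref{thm:S-sharp-finitegrid}, and hence for Theorem~\ref{thm:S-cts-conc}. That form comes most directly from the sub-Gaussian moment generating function bound you prove, which is itself stronger than the two tail inequalities. Your smoothing step also matters here. The lemma is applied to $f(z)=\max_j\{\mu_{X,j}+\Gamma_j^\top z\}$, which is Lipschitz but not everywhere differentiable, so the reduction from Lipschitz to smooth $f$ is exactly what that application requires.
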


\begin{proof}
This is the classical Gaussian-Lipschitz concentration inequality, due in this form to \citet{borell1975brunn} and \citet{sudakov1974extremal}; a textbook presentation appears in \citet{boucheron2013concentration} as well as in \citet{wainwright2019highdimensional}.
\end{proof}
We now apply this bound to get a sharper finite-grid concentration result of the maximum. 
\begin{theorem}[Sharper finite-grid maximum concentration]
\label{thm:S-sharp-finitegrid}
Let $X = (X_1, \ldots, X_q)^\top \sim \Normal_q(\mu_X, \Sigma)$ and $M_q = \max_j X_j$. Let $\sigma_q^2 = \max_j \Sigma_{jj}>0$. Then for every $r \ge 0$,
\[
\Prob(M_q - \E M_q \ge r) \le \exp\!\left(-\frac{r^2}{2 \sigma_q^2}\right),
\quad
\Prob(M_q - \E M_q \le -r) \le \exp\!\left(-\frac{r^2}{2 \sigma_q^2}\right).
\]
\end{theorem}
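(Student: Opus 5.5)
We will reduce Theorem~\ref{thm:S-sharp-finitegrid} to Lemma~\ref{lem:S-glipschitz} by writing $X$ as an affine image of a standard Gaussian vector. Factor $\Sigma = AA^\top$ with $A\in\R^{q\times q}$; for instance take $A=\Sigma^{1/2}$, the symmetric square root, which exists even when $\Sigma$ is only positive semidefinite, so no jitter is needed. Then $X \overset{d}{=} \mu_X + AZ$ with $Z\sim\Normal_q(0,I_q)$. Define
\[
f(z)=\max_{j=1,\ldots,q}\bigl\{\mu_{X,j}+a_j^\top z\bigr\},
\]
where $a_j^\top$ is the $j$th row of $A$. Then $M_q\overset{d}{=}f(Z)$, and both tail probabilities in the statement are probabilities about $f(Z)-\E f(Z)$.

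The key step is the Lipschitz bound for $f$. The maximum of a finite family of functions is Lipschitz with constant equal to the largest of their constants, since $|\max_j g_j(z)-\max_j g_j(z')|\le\max_j|g_j(z)-g_j(z')|$. Each affine map $z\mapsto \mu_{X,j}+a_j^\top z$ is Lipschitz with constant $\|a_j\|_2$ by Cauchy--Schwarz. Moreover $\|a_j\|_2^2=(AA^\top)_{jj}=\Sigma_{jj}$, so $f$ is $L$-Lipschitz with $L=\max_j\sqrt{\Sigma_{jj}}=\sigma_q>0$. This is where the positivity assumption $\sigma_q^2>0$ enters, because Lemma~\ref{lem:S-glipschitz} requires $L>0$.

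Next we check integrability. Since $|f(z)|\le \max_j|\mu_{X,j}|+\sigma_q\|z\|_2$, the expectation $\E M_q=\E f(Z)$ is finite. Applying Lemma~\ref{lem:S-glipschitz} with $d=q$ and $L=\sigma_q$ gives both one-sided bounds with exponent $-r^2/(2\sigma_q^2)$. These transfer to $M_q$ because equality in distribution preserves both the mean and the tail probabilities.

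No step is a serious obstacle. The only point needing care is that the Lipschitz constant equals $\max_j\sqrt{\Sigma_{jj}}$ rather than an operator norm of $A$, which could be much larger; the row-wise computation $\|a_j\|_2^2=\Sigma_{jj}$ handles this. Choosing $A=\Sigma^{1/2}$ rather than a Cholesky factor avoids any issue with singular $\Sigma$.
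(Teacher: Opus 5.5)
Your proposal is correct and follows essentially the same argument as the paper: write $M_q \overset{d}{=} f(Z)$ with $f(z)=\max_j\{\mu_{X,j}+a_j^\top z\}$ for a factor $AA^\top=\Sigma$, show $f$ is $\sigma_q$-Lipschitz via $\|a_j\|_2^2=\Sigma_{jj}$, and invoke Lemma~\ref{lem:S-glipschitz}. Your choice of $\Sigma^{1/2}$ rather than a Cholesky factor, to cover singular $\Sigma$, and your explicit check that $\E M_q$ is finite are small refinements that the paper leaves implicit.
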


\begin{proof}
Let $\Gamma$ be a matrix with $\Gamma \Gamma^\top = \Sigma$ (e.g., a Cholesky factor), and let $Z \sim \Normal_q(0, I_q)$. Then $X \stackrel{d}{=} \mu_X + \Gamma Z$. Write $\Gamma_j^\top$ for the $j$-th row of $\Gamma$, and define
\[
f(z) = \max_{j=1, \ldots, q}\{\mu_{X,j} + \Gamma_j^\top z\}, \qquad z \in \R^q.
\]
Then $M_q \stackrel{d}{=} f(Z)$. For any $z, z'$,
\[
|f(z) - f(z')| \le \max_j |\Gamma_j^\top (z - z')| \le \max_j \|\Gamma_j\|_2\, \|z - z'\|_2.
\]
Since $\|\Gamma_j\|_2^2 = \Sigma_{jj}$, $f$ is Lipschitz with constant $\sigma_q = \sqrt{\max_j \Sigma_{jj}}$. Lemma~\ref{lem:S-glipschitz}, applied with Lipschitz constant $L = \sigma_q$, yields the bound.
\end{proof}

We now extend to the continuous-time case. 

\begin{theorem}[Continuous-time maximum concentration]
\label{thm:S-cts-conc}
Let $X(t), t \in \Tcal$, be a centered separable Gaussian process with almost surely continuous sample paths on compact $\Tcal = [a, b]$. Let $\mu:\Tcal \to \R$ be deterministic and continuous, $C(t) = \mu(t) + X(t)$, and $M = \sup_{t \in \Tcal} C(t)$. Let $\sigma_\Tcal^2 = \sup_{t \in \Tcal} \Var\{X(t)\}>0$. Assume $\E M < \infty$. Then for every $r \ge 0$,
\[
\Prob(M - \E M \ge r) \le \exp\!\left(-\frac{r^2}{2 \sigma_\Tcal^2}\right),
\quad
\Prob(M - \E M \le -r) \le \exp\!\left(-\frac{r^2}{2 \sigma_\Tcal^2}\right).
\]
\end{theorem}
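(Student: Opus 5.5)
The plan is to reduce to the finite-grid result, Theorem~\ref{thm:S-sharp-finitegrid}, and pass to the limit along a refining sequence of grids. I will take nested grids $G_1 \subset G_2 \subset \cdots$ in $\Tcal$ with $\Delta(G_q)\to 0$, for instance dyadic points of $[a,b]$. Set $M_q = \max_{u\in G_q} C(u)$. By continuity of the sample paths (Theorem~\ref{thm:S-grid-conv}), $M_q \uparrow M$ almost surely, and the convergence is monotone because the grids are nested.

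For each $q$, the vector $C(G_q)$ is Gaussian with covariance diagonal entries $\Var\{X(u)\}\le \sigma_\Tcal^2$. The Lipschitz argument in the proof of Theorem~\ref{thm:S-sharp-finitegrid} gives Lipschitz constant $\max_{u\in G_q}\sqrt{\Var X(u)}\le\sigma_\Tcal$. Lemma~\ref{lem:S-glipschitz} then yields, uniformly in $q$,
\[
\Prob(M_q-\E M_q\ge r)\le e^{-r^2/(2\sigma_\Tcal^2)},\qquad \Prob(M_q-\E M_q\le -r)\le e^{-r^2/(2\sigma_\Tcal^2)}.
\]
If some $\Var X(u)$ vanishes, the bound with the larger constant $\sigma_\Tcal$ still holds. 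We need $\max_u \Var X(u)>0$, which is true once $G_q$ is fine enough, since the variance function is continuous and its supremum is positive. The expectation $\E M_q$ is finite because $M_q\ge C(u_1)$, which is integrable, and $M_q\le \max_u |C(u)|$.

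The main obstacle is passing the expectations to the limit, i.e.\ showing $\E M_q\to \E M$. Monotone convergence applies to $M_q - M_1 \ge 0$, since $M_1=C(u_1)$ is integrable Gaussian. It gives $\E M_q \uparrow \E M$, which is finite by the hypothesis $\E M<\infty$. This also confirms $\E M > -\infty$, so $\E M$ is a real number. With $M_q\to M$ a.s.\ and $\E M_q\to\E M$, we have $M_q-\E M_q\to M-\E M$ almost surely, hence in distribution.

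For the upper tail, the set $\{x> r\}$ is open, so the portmanteau theorem (or Fatou applied to indicators) gives
\[
\Prob(M-\E M>r)\le\liminf_q \Prob(M_q-\E M_q>r)\le e^{-r^2/(2\sigma_\Tcal^2)}.
\]
To get the non-strict inequality $\ge r$, I apply this at $r'<r$ and let $r'\uparrow r$, using continuity of the right-hand side in $r$. The lower tail is handled symmetrically. Separability is only needed to make $M$ measurable and equal to the supremum over a countable dense set; continuity of paths already gives this.
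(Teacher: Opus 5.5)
Your proposal is correct and follows essentially the same route as the paper: nested grids with $M_q\uparrow M$, the finite-grid Lipschitz bound holding uniformly in $q$ with constant at most $\sigma_\Tcal$, monotone convergence applied to $M_q-M_1$ to get $\E M_q\to\E M$, and a limiting argument for the tails. The differences are cosmetic. You pass to the limit via Fatou/portmanteau on the open sets $\{x>r'\}$ with $r'\uparrow r$ instead of through continuity points of the limit distribution. You also sidestep the positive-variance requirement on the grid by applying the Lipschitz lemma with $L=\sigma_\Tcal$ directly. Note that the case $r=0$ must be treated separately, since there is no admissible $r'<0$, but it is trivial.
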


\begin{proof}
Let $G_1 \subset G_2 \subset \cdots$ be an increasing sequence of finite grids containing a point of positive variance, with $\bigcup_q G_q$ dense in $\Tcal$ and $\Delta(G_q) \to 0$. Define $M_q = \max_{t \in G_q} C(t)$. By Theorem~\ref{thm:S-grid-conv}, $M_q \to M$ almost surely; because the grids are nested, $M_q \uparrow M$. Since $M_1$ has finite expectation and $M$ has finite expectation, monotone convergence applied to $M_q - M_1$ gives $\E M_q \to \E M$.

For each finite grid $G_q$, $X(G_q) = (X(t))_{t \in G_q}$ is multivariate Gaussian, and Theorem~\ref{thm:S-sharp-finitegrid} yields
\[
\Prob(M_q - \E M_q \ge r) \le \exp\!\left(-\frac{r^2}{2 \sigma_q^2}\right),
\qquad \sigma_q^2 = \max_{t \in G_q} \Var\{X(t)\}.
\]
Because $\sigma_q^2 \le \sigma_\Tcal^2$, we have $\Prob(M_q - \E M_q \ge r) \le \exp(-r^2 / (2 \sigma_\Tcal^2))$. Now $M_q - \E M_q \to M - \E M$ in distribution by the almost-sure convergence combined with $\E M_q \to \E M$. Taking limits at continuity points gives the upper-tail bound there. For any $r>0$, choose continuity points $r_n\uparrow r$ and use $\{M-\E M\ge r_n\}\downarrow\{M-\E M\ge r\}$ to obtain the bound at $r$; the case $r=0$ is immediate. Apply the same argument to the lower-tail inequalities from Theorem~\ref{thm:S-sharp-finitegrid}.
\end{proof}

\subsubsection*{Posterior concentration}
We now apply the above results to prove Theorem~\ref{thm:cmax-conc} in the main text. 
\begin{proof}[Proof of Theorem~\ref{thm:cmax-conc}]
Conditional on $Y=y$, the centered process $X(t)=C(t)-\mu(t)$ satisfies Theorem~\ref{thm:S-cts-conc}, with $M=C_{\max}$ and $\sigma_\Tcal=\sigma_\star$. Applying its two one-sided bounds and taking a union bound gives the first inequality. Setting $r=\sigma_\star\sqrt{2\log(2/\alpha)}$ gives the second.
\end{proof}

\clearpage

\section{Supporting illustrations and case studies}
\label{sec:support-cases}

\subsection{Sparse-sampling simulation}
\label{sec:support-sparse}

Figure~\ref{fig:sparse} illustrates the fitted concentration--time curve for a representative patient in the sparse oral scenario S4 of Section~\ref{sec:sim}.

\begin{figure}[!htbp]
\begingroup
\spacingset{1}
\centering
\includegraphics[width=\linewidth]{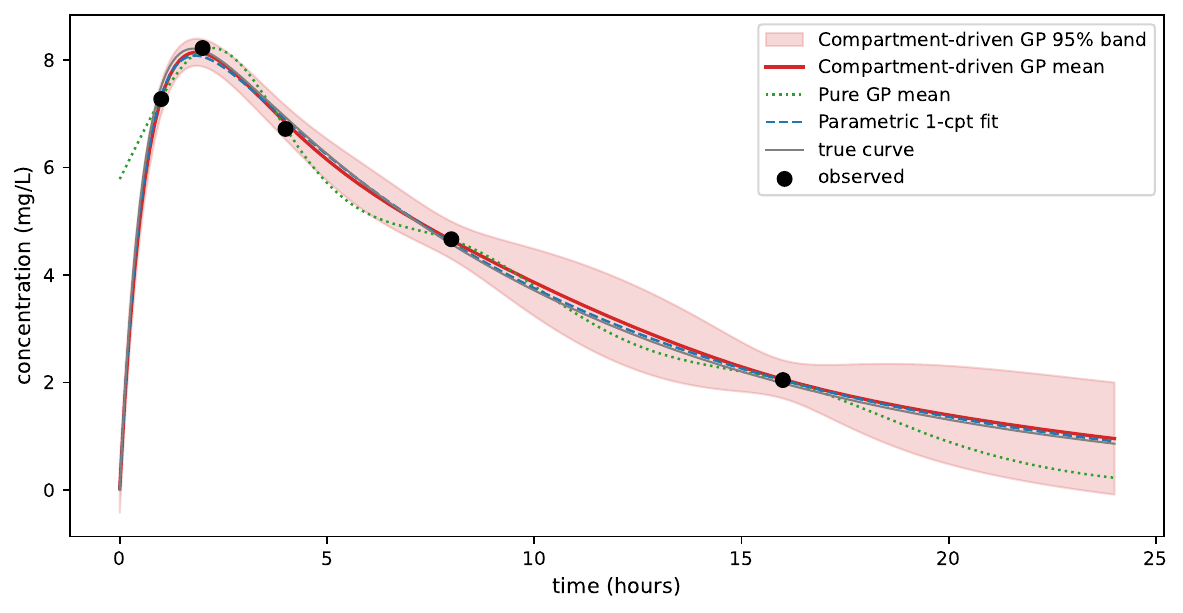}

\caption{Sparse one-compartment oral scenario (S4). For a representative patient with sampling times $t \in \{1, 2, 4, 8, 16\}$\,hr: observed points; true curve (gray); posterior means of the compartment-driven GP (solid), pure GP (dotted), and parametric one-compartment fit (dashed); $95\%$ credible band of the compartment-driven GP shaded. The posterior band widens in the terminal phase beyond the last sample at $16$\,hr. \label{fig:sparse}}
\endgroup
\end{figure}

\subsection{Warfarin concentration--time fits}
\label{sec:support-warfarin}

Figure~\ref{fig:warfarin-patients} shows posterior concentration--time fits for two representative patients from the warfarin analysis in Section~\ref{sec:warfarin}.

\begin{figure}[!htbp]
\begingroup
\spacingset{1}
\centering
\includegraphics[width=\linewidth]{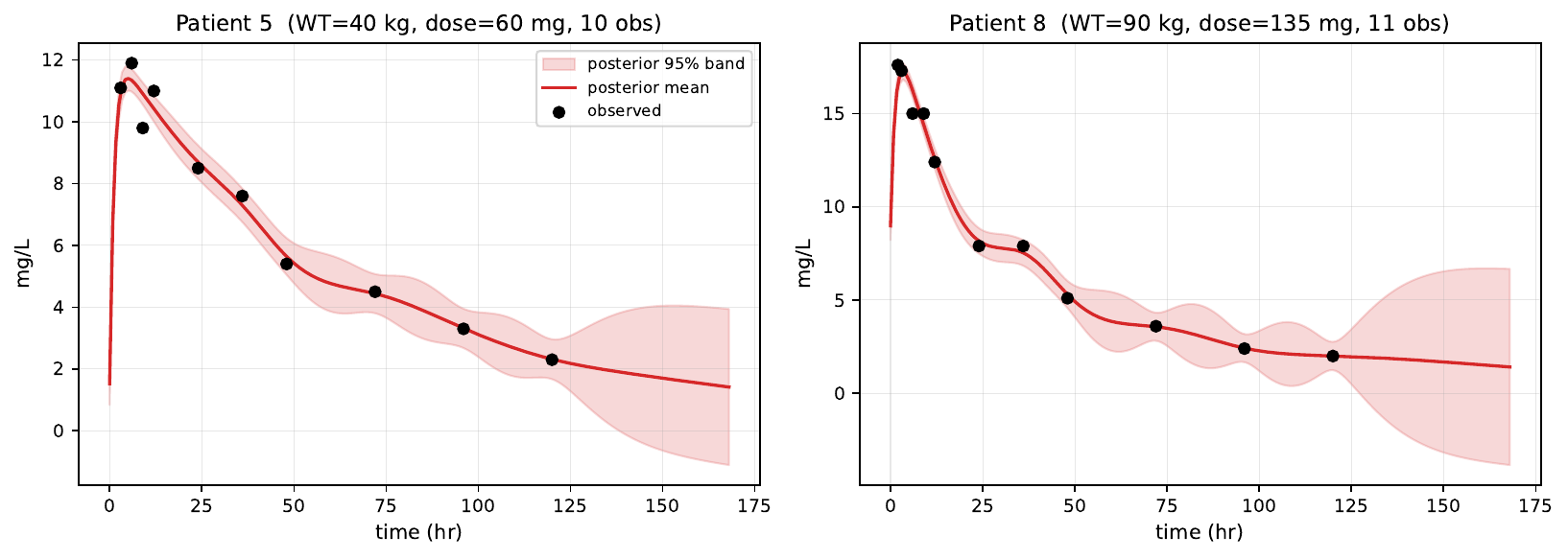}

\caption{Warfarin case study: posterior fits for two representative patients. Each panel shows observed plasma concentrations (black dots), the posterior mean (red solid), and the posterior 95\% pointwise band (red shaded).\label{fig:warfarin-patients}} 
\endgroup
\end{figure}

\subsection{Chemotherapy concentration--time fits}
\label{sec:support-chemo}

Figures~\ref{fig:5fu-patients} and~\ref{fig:cape-patients} show the cohort-mean concentration--time fits underlying the chemotherapy AUC comparisons in Section~\ref{sec:chemo}.

\begin{figure}[!htbp]
\begingroup
\spacingset{1}
\centering
\includegraphics[width=\linewidth]{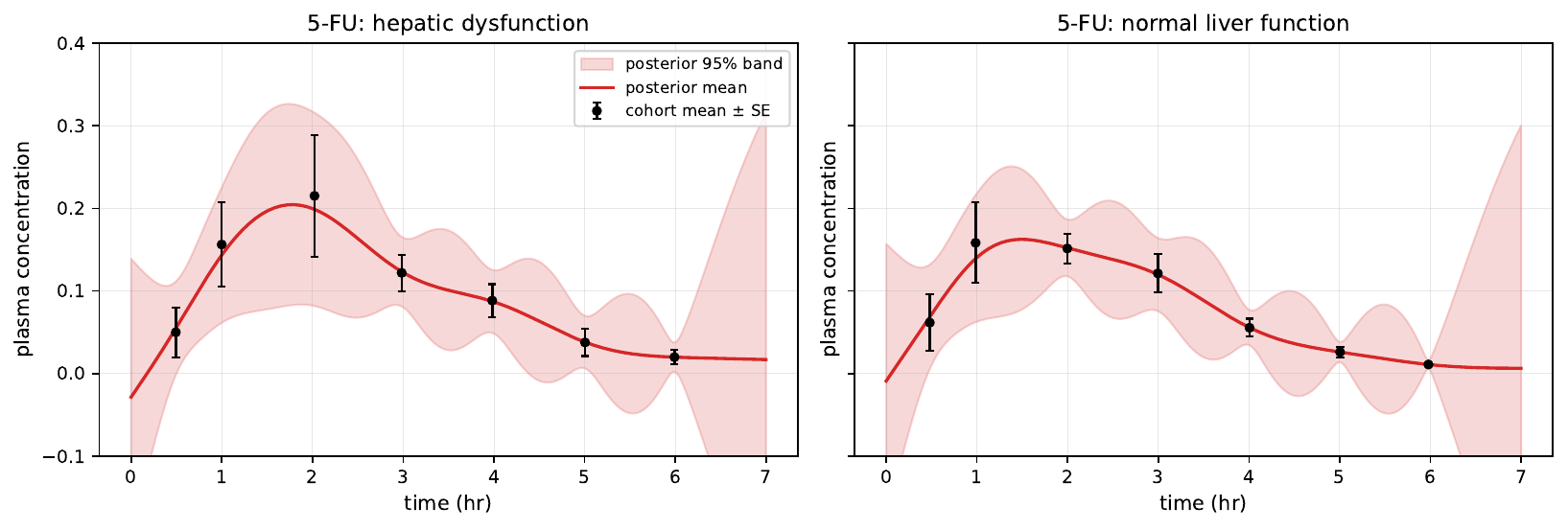}

\caption{5-FU case study: compartment-driven GP fits to the cohort-mean concentration--time curves digitized from the source clinical-pharmacology study. Left panel: hepatic dysfunction cohort. Right panel: normal liver function cohort. Each panel shows observed cohort-mean concentrations with per-timepoint standard errors (black with bars), the posterior mean curve (red solid), and the posterior 95\% pointwise band (red shaded).\label{fig:5fu-patients}}
\endgroup
\end{figure}

\begin{figure}[!htbp]
\begingroup
\spacingset{1}
\centering
\includegraphics[width=\linewidth]{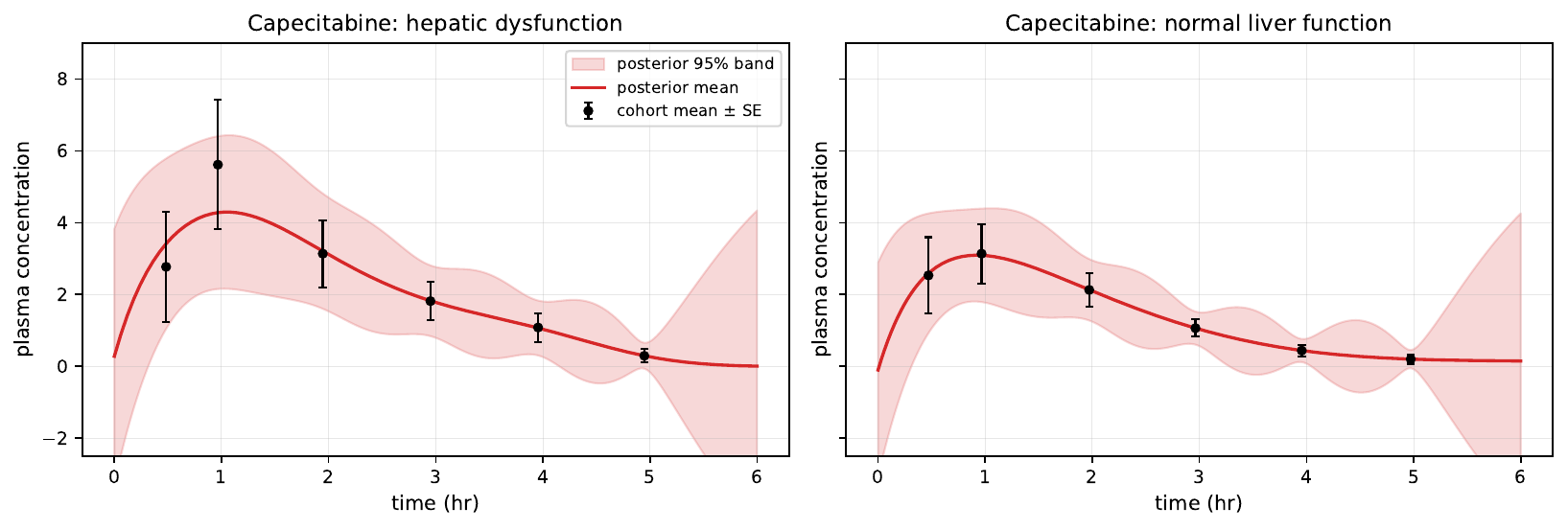}

\caption{Capecitabine case study: compartment-driven GP fits to the cohort-mean concentration--time curves digitized from the source clinical-pharmacology study. Left panel: hepatic dysfunction cohort. Right panel: normal liver function cohort. Each panel shows observed cohort-mean concentrations with per-timepoint standard errors (black with bars), the posterior mean curve (red solid), and the posterior 95\% pointwise band (red shaded).\label{fig:cape-patients} 
}
\endgroup
\end{figure}

\subsection{Theophylline case study}
\label{sec:theo}

This section reports a second individualized case study, analyzed with the same model and the settings tabulated in Section~\ref{sec:hp-app}. The theophylline dataset contains plasma concentrations of theophylline measured at 11 timepoints in each of 12 healthy adult subjects after a single oral dose of approximately 3--6\,mg/kg. This is a widely used benchmark dataset for pharmacokinetic modeling. Textbook adult population values $\kappa_a = 1.5\,\text{hr}^{-1}$ and $\kappa_e = 0.08\,\text{hr}^{-1}$ (half-life $\approx 9\,$hr) were used to specify the compartment model prior mean, with the dose entering on the concentration scale as $c_{0,i} = D_i / (v_{\text{kg}} W_i)$ and $v_{\text{kg}} = 0.5\,\text{L/kg}$.  

Figure~\ref{fig:theo-patients} shows posterior fits for two representative patients. 

\begin{figure}[!htbp]
\begingroup
\spacingset{1}
\centering
\includegraphics[width=\linewidth]{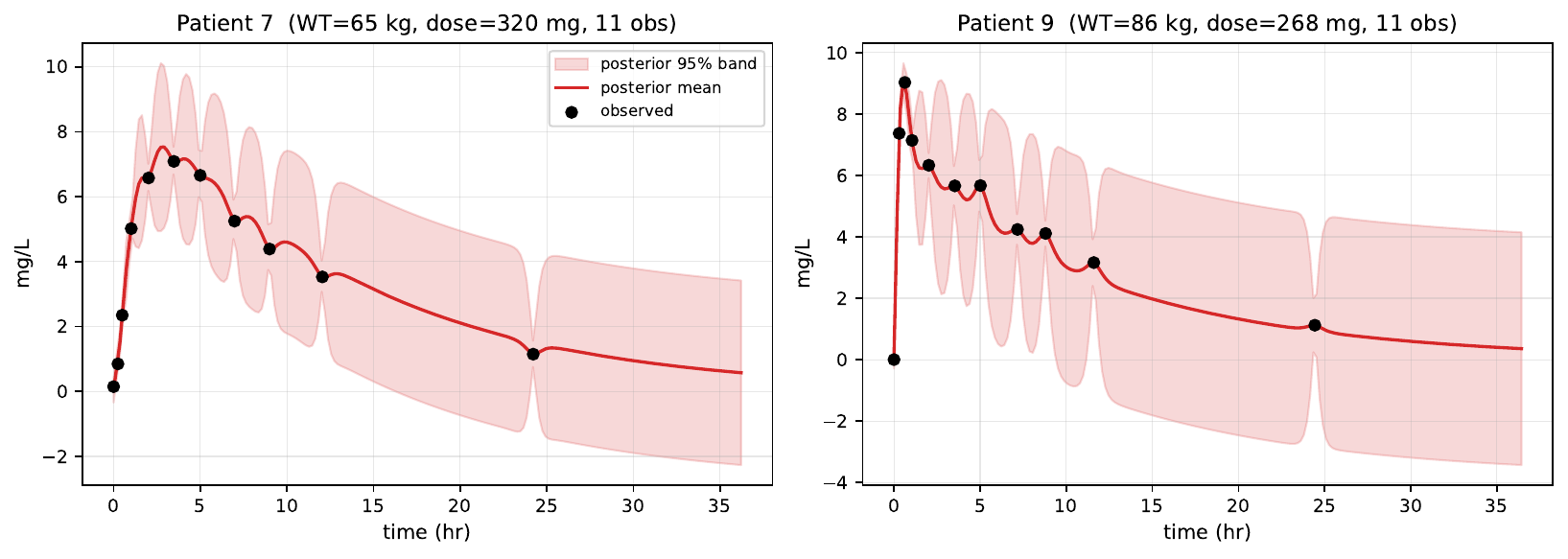}

\caption{Theophylline case study: posterior fits for representative patients, with the same legend convention as Figure~\ref{fig:warfarin-patients}.\label{fig:theo-patients}}
\endgroup
\end{figure}

Patient-level posterior 95\% intervals for $\AUC$ over $[t_{\text{dose}}, t_{\text{dose}} + 30\,\text{hr}]$ and for $C_{\max}$ are shown in Figure~\ref{fig:theo-intervals}. The cohort displays the expected one-compartment oral profile, with posterior median exposures spanning 81--131\,mg$\cdot$hr/L for $\AUC$ and 7.5--11.9\,mg/L for $C_{\max}$.

\begin{figure}[!htbp]
\begingroup
\spacingset{1}
\centering
\includegraphics[width=\linewidth]{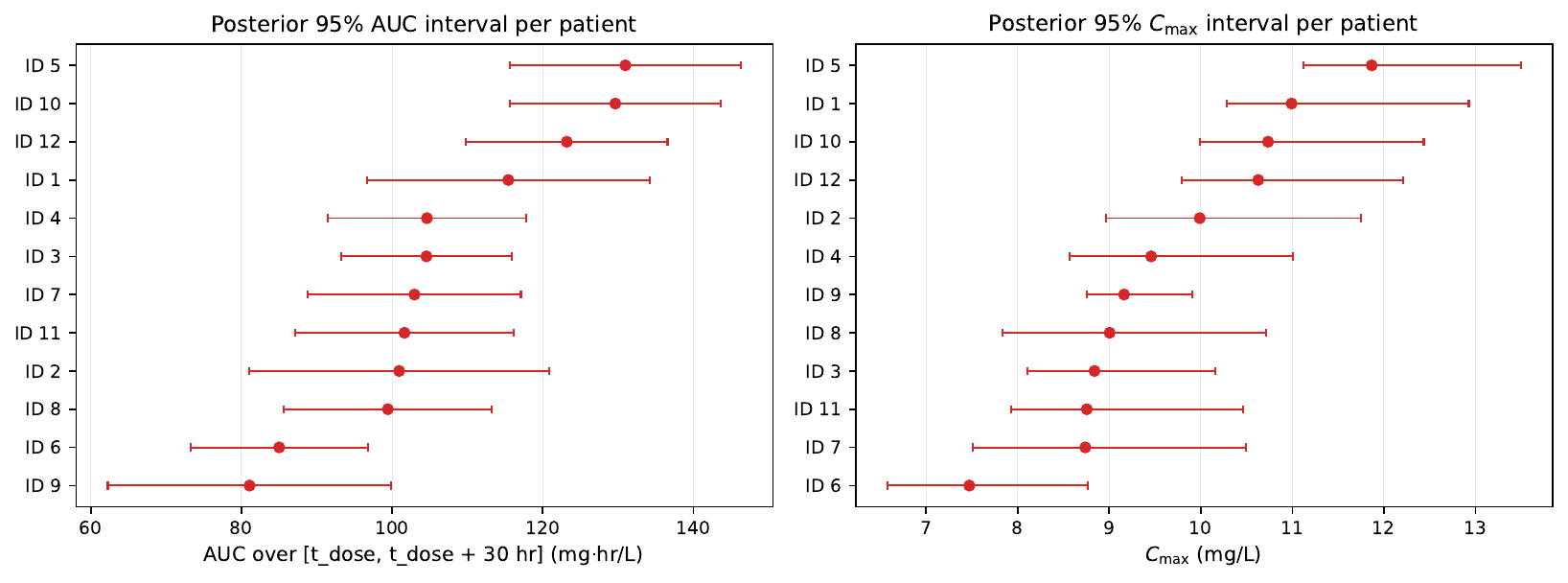}

\caption{Theophylline case study: posterior 95\% credible intervals per patient for $\AUC$ over $[t_{\text{dose}}, t_{\text{dose}} + 30\,\text{hr}]$ (left) and for $C_{\max}$ (right). Patients are sorted by posterior median.\label{fig:theo-intervals}}
\endgroup
\end{figure}

Aggregate posterior summaries are reported in Table~\ref{tab:theo-summary}.

\begin{table}
\caption{Theophylline case study: cohort summary of patient-level posterior exposures and credible interval widths across the 12 patients. \label{tab:theo-summary}}
\begin{center}
\begin{tabular}{lccc}
\toprule
Quantity & Minimum & Median & Maximum \\
\midrule
Posterior $\AUC$ median (mg$\cdot$hr/L) & 81 & 104 & 131 \\
Posterior $C_{\max}$ median (mg/L) & 7.5 & 9.3 & 11.9 \\
$\AUC$ 95\% interval width (mg$\cdot$hr/L) & 23 & 28 & 40 \\
$C_{\max}$ 95\% interval width (mg/L) & 1.15 & 2.44 & 2.98 \\
\bottomrule
\end{tabular}
\end{center}
\end{table}

\bibliographystyle{chicago}

\bibliography{pkgp}
\end{document}